\newtheorem{theorem}{Theorem}[section]
\newtheorem{proposition}[theorem]{Proposition}
\theoremstyle{definition}
\newtheorem{definition}[theorem]{Definition}
\newtheorem{example}[theorem]{Example}
\theoremstyle{remark}
\newtheorem{remark}[theorem]{Remark}
\numberwithin{equation}{section}
\newcommand{\lb}{[\![}
\newcommand{\rb}{]\!]}
\newcommand{\lbb}{\{\hspace{-3.1pt}[}
\newcommand{\rbb}{]\hspace{-3.2pt}\}}
\newcommand{\ml}{{\boldsymbol{\Lambda}}}
\begin{document}

\title{\bf Lie and Leibniz Algebras \\of Lower-Degree Conservation Laws}
\author{Boris M. Elfimov \& Alexey A. Sharapov}

\date{{\it \small Department of Quantum Field Theory, Tomsk State University, \\ Lenin ave. 36, Tomsk 634050, Russia}}

\maketitle

\begin{abstract}
A relationship between the asymptotic and lower-degree conservation laws in (non-) linear gauge theories is considered. We show that the true algebraic structure underlying asymptotic charges is that of Leibniz rather than Lie. The Leibniz product is defined through the derived bracket construction for the natural Poisson brackets and the BRST differential. Only in particular,  though not rare, cases that the Poisson brackets of lower-degree conservation laws vanish modulo central charges,  the corresponding Leibniz algebra degenerates into a Lie one.  The general construction is illustrated by two standard examples: Yang-Mills theory and Einstein's gravity. 

\end{abstract}

\section{Introduction}
Although exact symmetries seldom if ever occur in nature, they do play a significant role in our understanding of physical laws. The famous Noether's theorem, for example, establishes a correspondence between global symmetries and conservation laws in Lagrangian theories. Local or gauge symmetries, in turn, govern the structure of fundamental interactions 
in the Standard Model of particle physics and General Relativity.  Unlike global symmetries, gauge 
invariance never leads to nontrivial conserved currents, as was found by Emmy Noether herself. Yet, 
in some cases, it is possible to associate certain conserved charges with gauge  symmetries as well. These are given by integrals over lower-dimensional surfaces rather than entire physical space. 
Therefore one calls them the lower-degree conservation laws. For a detailed account of the subject we refer the reader to  \cite{TSUJISHITA19913,10.2307/2152923,PhysRevLett.77.4109, BARNICH2000439,Sharapov:2016sgx} and references therein. 

It is not rare in field theory to consider a situation where the interaction of fields is presumably localized in a compact region of space, outside of which the fields behave as almost free. Such a supposition is a departing point of the scattering problem and an indispensable element of the multi-particle interpretation of quantum field states. In that case, one can compute the lower-degree conserved charges by integrating over surfaces lying entirely outside the interaction region. As the fields under the integral sign are supposed to be almost free, one may disregard much of the nonlinearities in the Lagrangian and charges.  Not only does this choice of integration surface facilitate the actual computation of charges, but it also  leads to a further generalization of the very notion of a conservation law. Indeed, if a nonlinear gauge theory admits a reasonable linearization, it may happen that the linear theory enjoys lower-degree conservation laws that are not present in  the full nonlinear theory. Then one may attribute the corresponding  lower-degree charges to the nonlinear theory itself or, more precisely, to those its solutions that differ arbitrary little from solutions of the linear theory outside a compact region of space (the interaction region).  In the physical literature, one usually refers to the charges assigned in such a way to a nonlinear gauge theory as {\it asymptotic charges}. The ADM energy in general relativity and color charges in Yang--Mills theory are prime examples of asymptotic charges defined through the integration over two-sphere at space infinity. For an extended discussion of the above  approach to asymptotic conservation laws see \cite{PhysRevLett.77.4109, Torre:1997cd}.

In Lagrangian theories, both the conventional and lower-degree conservation laws come equipped  with a
Lie algebra structure w.r.t. some natural Poisson brackets (canonical or covariant) \cite{Dickeybook,Barnich_1996,Sharapov_2015}.  Modulo central extension, this 
algebra is known to be isomorphic to the Lie algebra of infinitesimal symmetries.   If asymptotic charges are now defined through the linearization procedure above and do not survive in the full nonlinear theory, one runs into a problem: Linearization usually implies abelization of the gauge symmetry algebra, so that the corresponding asymptotic charges may form either an abelian Lie algebra or a central extension thereof, whatever a nonlinear gauge theory. This conclusion, however,  contradicts to some other approaches to asymptotic  conservation laws that do not appeal directly to linearization \cite{osti_4234816,Brown:1986nw,Silva_1999,BARNICH20023}. In the general case  one expects the nonabelian gauge symmetries to generate nonabelian Lie algebras of surface charges modulo central extension. We thus face the following dilemma: a nonabelian gauge theory {\it per se} may 
have no nontrivial conservation laws of lower degree, while the lower-degree conservation laws resulting from its linearization cannot form a nonabelian Lie algebra w.r.t. Poisson brackets.   

In this paper, we resolve this puzzle by introducing a new product on the space of lower-degree conservation laws of a linearized gauge theory. The product is constructed as a derived bracket \cite{Kosmann_Schwarzbach_2004} and involves the original Poisson  brackets together  with the classical BRST differential. It is the dependence of interaction through the BRST differential that restores a nonabelian algebra structure on conserved charges. Generally the product we introduce is not skew-symmetric but  satisfies the axioms of a Leibniz algebra.  This suggests that a `genuine' algebraic structure underlying asymptotic symmetries and  conservation laws is that  of Leibniz rather than Lie\footnote{An instructive discussion of the origin of Leibniz algebras in the context of gauge symmetries can be found in  \cite[Sec. 1]{Bonezzi_2020}.}. In order to test and illustrate our construction we re-derive the well-known algebras of surface charges in Einstein's gravity and Yang-Mills theory. 

Compared to other approaches our method is purely algebraic: we are not concerned with the fall-off
of  fields at infinity or suitable boundary conditions, nor do we make any assumption about the differential structure of field equations and/or gauge generators. What we actually use is the separation of field equations into a free part and interaction. The separation is clearly ambiguous and we  regard it as part of the definition of a classical field theory. Another advantage of our construction is that it applies equally well to non-Lagrangian equations of motion endowed with a compatible presymplectic structure. The presence of a presymplectic structure is known to be much less restrictive for dynamics than the existence of a Lagrangian.

\section{Variational tricomplex of a gauge system}
This section provides a brief glossary on local gauge systems in the formalism of variational tricomplex. For a  more coherent exposition of these concepts we refer the reader to \cite{BARNICH2000439, Kaparulin:2011xy, Anderson1992, Dickeybook}.  Throughout the paper, we systematically use the notation and terminology of  \cite{Sharapov:2016sgx}. 
 
\paragraph{Classical fields.} In modern language {\it classical fields} are sections of a locally trivial fiber bundle $\pi: E\rightarrow M$ over a spacetime manifold $M$. The typical fiber of $E$ is called the {\it target space of fields}. The space of all field configurations is thus identified with the   space of smooth sections $\Gamma(E)$.  In this paper, we restrict ourselves to the case where $E$ is a $\mathbb{Z}$-graded vector bundle over $M$. 
As is customary in the physical literature, we will refer to this $\mathbb{Z}$-grading as the {\it ghost number} and denote the degree of a homogeneous object $A$ by $\mathrm{gh}(A)$. The Grassmann parity of fields (which governs the signs) is given by the ghost number modulo two.  In physical terms this means that we restrict ourselves
to gauge theories without fermionic degrees of freedom. The extension of our results to general theories with bosonic and fermionic fields is straightforward.

\paragraph{Variational bicomplex.} According to the principle of spacetime {\it locality}, the classical dynamics of fields are governed by partial differential equations.  The jet-bundle formalism offers then a  natural geometric framework for formulating and studying  local field theories. A relevant jet-bundle for our considerations is the bundle of infinite jets $\pi_\infty: J^\infty E\rightarrow M$ associated with the vector bundle $\pi: E\rightarrow M$. 
Each section $\varphi$ of $E$ induces a section $j^\infty \varphi$ of $J^\infty E$ by the following rule. If $E|_U=U\times \mathbb{R}^N$ is a trivializing chart with local coordinates $(x^i,\phi^a)$, then $(x^i, \phi^a,\phi^a_i, \phi^a_{ij},\ldots)$ are local coordinates in $J^\infty E|_U$ and the induced section $j^\infty \varphi: M\rightarrow J^\infty E$ is defined  by 
\begin{equation}
    x\mapsto (x, \varphi^a(x), \partial_i\varphi^a(x),\partial_i\partial_j\varphi^a(x),\ldots )\,.
\end{equation}
The section $j^\infty \varphi$ is called the $\infty$-jet prolongation of $\varphi$. 

The total space of $J^\infty E$, defined through the inverse limit $\lim\limits_{\leftarrow}J^kE$ of finite-dimensional jet-bundles, inherits the structure of a $\mathbb{Z}$-graded manifold. 
Let $\Lambda (J^\infty E)=\lim\limits_{\rightarrow }\Lambda(J^kE)$ denote the algebra of differential forms on $J^\infty E$.  It is known that the de Rham complex of $\Lambda(J^\infty E)$ splits naturally into a bicomplex for the {\it vertical differential} $\delta$ and {\it horizontal differential} $d$, so that 
\begin{equation}
    \delta^2=0\,,\qquad d^2=0\,,\qquad \delta d+ d\delta=0\,,
\end{equation}
with $\delta+d$ being the exterior differential in $\Lambda (J^\infty E)$. In the adapted coordinates above, every form on $J^\infty E$ can be written as a finite 
sum of homogeneous forms 
\begin{equation}
    f  \delta\phi^{a_1}_{I_1}\wedge\cdots\wedge \delta\phi^{a_p}_{I_p}\wedge dx^{i_1}\wedge\cdots\wedge dx^{i_q} \in \Lambda^{p,q}(J^\infty E)\,, 
\end{equation}
where $f$ is a smooth function on $J^\infty E|_U$ and $I=i_1i_2\cdots i_{|I|}$ denotes the multi-index of order $|I|$. The numbers $p$ and $q$ are called  the {\it vertical} and {\it horizontal} degree of a form, respectively. Since the vertical differential $\delta: \Lambda^{p,q}(J^{\infty}E)\rightarrow \Lambda^{p+1,q}(J^\infty E)$ implements 
the action of the variational derivative on fields, one refers to the bicomplex $\Lambda^{\ast,\ast}(J^\infty E; \delta, d)$ as {\it variational}. 

\paragraph{Classical BRST differential.} An evolutionary vector field\footnote{Recall that a vertical vector field $X$ is called evolutionary if $i_Xd + (-1)^{\mathrm{gh}(X)}di_X = 0$,
where $i_X$ is the operation of contraction of $X$ with differential forms. The Lie algebra of evolutionary vector fields will be denoted by $\mathfrak{X}_{\mathrm{ev}}(J^{\infty}E)$.} $Q$ on $J^\infty E$ is called {\it homological} if 
\begin{equation}\label{QQ}
    [Q,Q]=0\,,\qquad \mathrm{gh}(Q)=1\,.
\end{equation}
In what follows we will use the symbol $\delta_Q$ to denote the Lie derivative along $Q$. Clearly, $\delta_Q^2=0$. Moreover, the operator $\delta_Q$ anticommutes with the differentials $d$ and $\delta$: 
\begin{equation}
    \delta_Q d+d\delta_Q=0\,,\qquad \delta_Q \delta +\delta \delta_Q=0\,.
\end{equation}
The equalities follow from Cartan's formula for the Lie derivative and the fact that the vector field $Q$ is evolutionary. 
This yields the {\it variational tricomplex} $\Lambda^{\ast,\ast,\ast}(J^\infty E; \delta, d, \delta_Q)$, where the third differential
\begin{equation}
    \delta_Q: \Lambda^{p,q,r}(J^\infty E)\rightarrow \Lambda^{p,q,r+1}(J^\infty E)
\end{equation}
increases the ghost number by one. Under certain `properness'  conditions (see e.g. \cite{Kaparulin:2011xy}) the  operator $\delta_Q$ 
is called the {\it classical BRST differential}. The corresponding homological vector field $Q$ carries all the information about a  classical gauge system. In particular, the true  field configurations are determined by the stationary points of $Q$. Under the standard regularity assumptions the latter form a graded submanifold $\Sigma^\infty\subset J^{\infty}E$ defined through the inverse limit.  By definition, a field $\varphi\in \Gamma(E)$ satisfies the classical equations of motion iff $j^\infty \varphi\in \Sigma^\infty$. 

In physics, the submanifold $\Sigma^\infty\subset J^\infty E$ is usually referred  to as the {\it shell}.
By definition, the shell is invariant under the action of the homological vector field $Q$.  The restriction of the variational bicomplex $\Lambda^{\ast,\ast}(J^\infty E; \delta, d)$ on $\Sigma^\infty$ gives the {\it on-shell bicomplex} $\Lambda^{\ast,\ast}(\Sigma^\infty; \delta, d)$. The latter inherits the additional grading by the ghost number. In general, the on-shell bicomplex is not $d$-exact even locally and this gives rise to interesting invariants associated with gauge dynamics. The most notable among them are the cohomology groups $H^{0,\ast}(\Sigma^\infty; d)$ in ghost number zero. These are  known as the {\it characteristic cohomology} of a gauge system \cite{TSUJISHITA19913, 10.2307/2152923, PhysRevLett.77.4109,BARNICH2000439,  BARNICH20023,   Verbovetsky:1997fi}.  The study of natural algebraic structures on characteristic cohomology is the main subject of the present paper.  

\paragraph{Presymplectic structure.} Another geometric entity present in most gauge theories is called a {\it presymplectic structure}. This is given by a form $\omega \in \Lambda^{2,m}(J^\infty E)$ obeying the condition 
\begin{equation}
    \delta \omega \simeq 0\,.
\end{equation}
Hereinafter the sign $\simeq$ means equality modulo $d$-exact forms. Notice that the horizontal degree of a presymplectic form may take any value in the interval $0\leq m\leq \dim M$. We also impose no restriction on the ghost number of $\omega$. 
Two presymplectic forms are considered equivalent if they differ by a $d$-exact form.   By abuse of notation, we will not distinguish between a presymplectic form $\omega$ and its equivalence class in $\Lambda^{2,m}(J^\infty E)/d\Lambda^{2,m-1} (J^\infty E)$.  As we are dealing  with a vector bundle $E$,  the relative `$\delta$ modulo $d$' cohomology appears to be trivial in positive vertical degree \cite[Sec.19.3.9]{Dickeybook}. Among other things this means that each presymplectic structure has a $\delta$-exact representative, i.e., there exists $\theta\in\Lambda^{1,m}(J^\infty E) $ such that $\omega\simeq \delta \theta$. The form $\theta$ is called a {\it presymplectic potential}. In what follows we will always work with $\delta$-exact representatives $\omega=\delta\theta$ of presymplectic forms, so that $\delta\omega=0$.

An evolutionary vector field $X$ is called {\it Hamiltonian} relative to  $\omega$ if $L_X \omega\simeq 0$.  Again, the triviality of relative $\delta$-cohomology implies that 
\begin{equation}\label{XH}
    i_X\omega\simeq \delta \alpha
\end{equation}
for some $\alpha\in \Lambda^{0,m}(J^\infty E)$. It is natural to refer to the form $\alpha$ as Hamiltonian or as a Hamiltonian of the vector field $X$. 
Notice that Eq. (\ref{XH}) defines $\alpha$ up to a $d$-exact form; hence, two Hamiltonians $\alpha$ and $\alpha'$ are considered equivalent if $\alpha\simeq \alpha'$.  The space of all Hamiltonian $m$-forms is a graded Lie algebra with the  bracket
\begin{equation}\label{LB}
   \{\alpha,\beta\}=(-1)^{\mathrm{gh}(X_\alpha)}i_{X_\alpha}i_{X_\beta}\omega
\end{equation}
of degree $-\mathrm{gh}(\omega)$. Here $X_\alpha$ and $X_\beta$ are Hamiltonian vector fields with the Hamiltonians $\alpha$ and $\beta$.  The bracket enjoys the symmetry property
\begin{equation}
    \{\alpha,\beta\}\simeq -(-1)^{(\mathrm{gh}(\alpha)-\mathrm{gh}(\omega))(\mathrm{gh}(\beta)-\mathrm{gh}(\omega))}\{\beta,\alpha\}
\end{equation}
and obeys the Jacobi identity
\begin{equation}
    \{\gamma,\{\alpha,\beta\}\}   \simeq \{\{\gamma,\alpha\},\beta\}+(-1)^{(\mathrm{gh}(\gamma)-\mathrm{gh}(\omega))(\mathrm{gh}(\alpha)
    -\mathrm{gh}(\omega))}\{\alpha,\{\gamma,\beta\}\}\,.
\end{equation}
For more details see \cite[Prop. 2.1]{Sharapov_2015}. 

\paragraph{Gauge systems and their descendants.} 
By a {\it gauge system} on $J^\infty E$ we mean a pair $(Q,\omega)$ composed of a homological vector field $Q$ and a $Q$-invariant presymplectic form $\omega$ of type $(2,m)$. In other words, the vector field $Q$ is supposed to be Hamiltonian relative to $\omega$, so that $\delta_Q\omega \simeq 0$.   The last relation is equivalent to  
\begin{equation}\label{Qw}
    \delta_Q\omega=d\omega_1
\end{equation}
for some $\omega_1\in \Lambda^{2,m-1} (J^\infty E)$. We will refer to the horizontal degree of the form $\omega$ as the {\it degree of a gauge system} $(Q,\omega)$. Applying the differentials $\delta $ and $\delta_Q$ to both sides of (\ref{Qw}) and using the acyclicity of  $d$ in positive vertical degree, we find that
$$
\delta \omega_1\simeq 0 \,,\qquad \delta_Q\omega_1\simeq 0\,.
$$
Hence, $\omega_1$ is a $Q$-invariant presymplectic form of type $(2,m-1)$. Again, without loss of generality we may assume $\omega_1=\delta\theta_1$ for some presymplectic potential $\theta_1\in \Lambda^{1,m-1}(J^\infty E)$. We call the pair $(Q,\omega_1)$ the {\it descendent gauge system}. Iterating the above construction once and again, one can produce a sequence of gauge systems
$(Q,\omega_k)$ where the $k$-th presymplectic structure $\omega_k\in \Lambda^{2,m-k}(J^\infty E)$ is the descendant  of $\omega_{k-1}$. The minimal $k$ for which $\omega_k\simeq 0$ is called the {\it length of a gauge system}.  

\paragraph{Symmetries.} An evolutionary vector field $X$ is called a {\it symmetry} of a gauge system $(Q,\omega)$ if it commutes with $Q$, i.e., $
     [X,Q]=0$.
Therefore the flow generated by  $X$ preserves the shell $\Sigma^\infty\subset J^\infty E$ mapping solutions to solutions.  

A symmetry $X$ is called {\it trivial} or {\it gauge symmetry} if their exists an evolutionary vector field $Y$ such that $X=[Q,Y]$. Clearly, the gauge symmetries form an ideal in the Lie algebra of all symmetries, so that one may regard the corresponding quotient algebra, denoted by $\mathrm{Sym}(Q)$, as the Lie algebra of nontrivial symmetries. Alternatively,
we can identify the nontrivial symmetries  $\mathrm{Sym}(Q)$ with the cohomology of the differential graded Lie algebra $(\mathfrak{X}_{\mathrm{ev}}(J^\infty E), \delta_Q)$. 

A symmetry $X$ is called Hamiltonian if $X$ is a Hamiltonian vector field, that is, $L_X\omega\simeq 0$. Finally, we say that a symmetry $X$ is {\it on-shell Hamiltonian} if 
\begin{equation}\label{HH}
    i_X\omega|_{\Sigma^{^\infty}}\simeq \delta \alpha|_{\Sigma^{^\infty}}
\end{equation}
for some $\alpha$.  Read from right to left, this relation defines an  {\it on-shell Hamiltonian form} $\alpha$. 
Writing $\mathrm{Sym}(Q,\omega)$ and $\mathrm{Sym}_\Sigma(Q,\omega)$ for the Lie algebras of Hamiltonian and on-shell Hamiltonian symmetries, respectively, we get the following sequence of subalgebras in the Lie algebra of vector fields: 
$$
\mathrm{Sym}(Q,\omega)\subset \mathrm{Sym}_\Sigma(Q,\omega)\subset \mathrm{Sym}(Q)\subset \mathfrak{X}_{\mathrm{ev}}(J^\infty E)\subset \mathfrak{X}(J^\infty E)\,.
$$

When dealing with on-shell Hamiltonian symmetries and forms it is convenient to introduce the following equivalence relation
on the space of differential forms $\Lambda(J^\infty E)$:
\begin{equation}\label{ERR}
    \alpha \approx \alpha' \quad \Leftrightarrow\quad  \alpha|_{\Sigma^{^\infty}}\simeq \alpha'|_{\Sigma^{^\infty}}. 
\end{equation}
Then Eq. (\ref{HH}) takes the form $i_X\omega\approx \delta \alpha$. Although the last equation implies $L_X\omega\approx 0$, the converse is not always true as the on-shell bicomplex may not be $\delta$-exact even for vector bundles. 
Notice that Eq. (\ref{HH}) defines $\alpha$ only modulo $d$-exact and on-shell vanishing forms. Therefore, it makes sense to consider the equivalence classes of on-shell Hamiltonian forms defined by Rel. (\ref{ERR}). Due to the regularity condition, this implies the existence of forms $\beta$ and $\gamma$ such that $\alpha-\alpha'=i_Q\beta+d\gamma$. It is significant that the equivalence classes of on-shell Hamiltonian forms constitute a Lie algebra, which we denote by $\Lambda^H(Q,\omega)$, for the same Lie bracket (\ref{LB}). The last fact is quite obvious as we can identify the on-shell Hamiltonian forms with the Hamiltonian forms on $\Sigma^{\infty}$ endowed with the induced presymplectic structure $\omega|_{\Sigma^{^\infty}}$.

\paragraph{Conservation laws.} A form $\alpha\in \Lambda^{0,m}(J^\infty E)$ is said to define a {\it conservation law} of degree $m$ if 
\begin{equation}\label{da}
    d \alpha|_{\Sigma^{^\infty}}=0\,.
\end{equation}
A conservation law is called trivial if $\alpha\approx 0$. In other words, the space of nontrivial conservation laws of degree $m$ is identified with the cohomology group $H^{0,m}(\Sigma^{^\infty}; d)$ of the on-shell bicomplex. Due to the standard regularity conditions on $Q$, Eq. (\ref{da})  implies the existence of a form $\chi\in \Lambda^{1,m+1}(J^\infty E)$ such that 
\begin{equation}
    d\alpha=i_Q\chi\,.
\end{equation}
The form $\chi$ is called the {\it characteristic} of a conservation law $\alpha$.  

Let $C\subset M$ be an $m$-cycle, $\alpha$ conservation law of degree $m$, and $\varphi\in \Gamma(E)$ a solution to the field equations, then the functional
\begin{equation}
    I[\varphi]=\int_C(j^\infty \varphi)^\ast (\alpha)
\end{equation}
depends only on the homotopy class of $C$ in $M$ and is called the {\it conserved charge}.  

\paragraph{Flat gauge systems.} We say that a homological vector $Q$ field on $J^\infty E$ is {\it flat } if $i_Q \alpha|_M=0$ for all $\alpha \in \Lambda(J^\infty E)$.  Geometrically,
this means that the submanifold  $M\subset J^\infty E$, identified with the zero section, belongs to the zero locus of $Q$. Similarly, a form $\alpha \in \Lambda(J^\infty E)$ is called flat if $\alpha|_M=0$. The flat forms constitute an ideal in the exterior algebra $\Lambda(J^\infty E)$ and a subcomplex in the variational tricomplex whenever $Q$ is flat.  Let us denote the latter by $\Lambda_{\mathrm{^{flat}}}^{\ast,\ast,\ast}(J^\infty E; \delta,d,\delta_Q)$. The horizontal, vertical, and relative cohomology of the corresponding flat bicomplex $\Lambda_{\mathrm{^{flat}}}^{\ast,\ast}(J^\infty E; \delta,d)$ are given by the groups
\begin{equation}
\begin{array}{rl}
H_{^{\mathrm{flat}}}^{\ast,p}(J^{\infty}E; d)=0\,,\quad p<n\,,&\qquad H_{^{\mathrm{flat}}}^{\ast,n}(J^{\infty}E; d)\simeq \Lambda^{\ast,n}_{\mathrm{^{flat}}}(J^\infty E)/d\Lambda^{\ast,n-1}_{\mathrm{^{flat}}}(J^\infty E)\,,\\[3mm]
H_{^{\mathrm{flat}}}^{\ast,\ast}(J^{\infty}E; \delta)=0\,,&\qquad 
H_{^{\mathrm{flat}}}^{\ast,\ast}(J^{\infty}E; \delta/d)=0\,,
\end{array}
\end{equation}
$n$ being the dimension of the base manifold $M$. The proof can be found in \cite[Ch.19]{Dickeybook}. Amongst the elements of $H_{^{\mathrm{flat}}}^{0, n}(J^{\infty}E; d)$ are the equivalence classes of Lagrangians.   We will say that $(Q,\omega)$ is a {\it flat gauge system} if $Q$ is flat.  From the physical viewpoint, flat gauge systems correspond to field theory models without external sources.

\section{Lie algebra of conservation laws}\label{S3}

The presence of the zero section $j^{\infty}(0): M\rightarrow J^\infty E$ together with the canonical projection $\pi_\infty: J^\infty E\rightarrow M$ allows us to split the complex of purely horizontal forms into the direct sum
\begin{equation}\label{flat}
\begin{array}{c}
    \Lambda^{0,\ast}(J^\infty E; d)=\Lambda^{0,\ast}_{^\mathrm{flat}}(J^\infty E;d)\bigoplus \pi^\ast_\infty\Lambda^\ast(M;d)\,.
    \end{array}
\end{equation}
For flat gauge systems, this results in the natural spitting of the cohomology groups 
\begin{equation}\label{HCL}
\begin{array}{c}
    H^{0,\ast}(\Sigma^{\infty};d)=H^{0,\ast}_{^\mathrm{flat}}(\Sigma^{\infty}; d)\bigoplus H^\ast(M;d)
    \end{array}
\end{equation}
associated with the equivalence classes of conservation laws. The second summand in (\ref{HCL}) owes its existence to the topology of  $M$ rather than field dynamics. 
Therefore, in the sequel, we will ignore it, focusing on the subspace of {\it flat conservation laws} $H^{0,\ast}_{^\mathrm{flat}}(\Sigma^{\infty}; d)$. By definition, the elements of the latter subspace vanish identically when evaluated on the zero field configuration. 

Let us denote by $\hat \alpha$ the projection of a purely horizontal form $\alpha$ on the first summand in (\ref{flat}). The elements of the second summand in (\ref{flat}), being in the kernel of the differential $\delta$, belong to the center of the Lie algebra $\Lambda^H (Q,\omega)$ of on-shell Hamiltonian forms. This gives the short exact sequence of Lie algebras 
\begin{equation}\label{ES}
    \xymatrix{0\ar[r]& \pi^\ast_\infty\Lambda(M)\ar[r]^-{i}&\Lambda^{H}(Q,\omega)\ar[r]^-{p}&\hat\Lambda{}^{H}(Q,\omega)\ar[r]&0}\,,
\end{equation}
where $i$  is the natural embedding and $p$ is the canonical projection induced by the assignment  $\alpha \mapsto \hat \alpha$. The exact sequence does not split in general, in which case $\Lambda^{H}(Q,\omega)$ is a nontrivial central extension of $\hat\Lambda{}^{H}(Q,\omega)$.

With the definitions above we  are ready to formulate a generalization of  Noether's first theorem to gauge systems. 

\begin{theorem}[\cite{Sharapov:2016sgx}]\label{th31}
Let $(Q,\omega)$ be a flat gauge system of degree $m$.  Then  each Hamiltonian symmetry $X$ gives rise to a uniquely defined sequence  $\alpha_1,\alpha_2,\ldots, \alpha_m$ of flat conservation laws. Trivial symmetries generate trivial conservation laws. 
\end{theorem}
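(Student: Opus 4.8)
The plan is to build the sequence $\alpha_1,\dots,\alpha_m$ by a descent in two stages and then check the conservation-law property of each member. First I would use the tower of descendant presymplectic forms already available from the definition of a gauge system, $\omega=\omega_0,\omega_1,\dots,\omega_m$ with $\delta_Q\omega_{k-1}=d\omega_k$, $\delta_Q\omega_m=0$ and $\omega_k=\delta\theta_k$; observe that every $\omega_k$, every $\theta_k$, and in particular every $i_X\omega_k$ has positive vertical degree and therefore restricts to zero on the zero section, i.e.\ is automatically flat. Throughout I would work modulo the central ideal $\pi^\ast_\infty\Lambda(M)$ of (\ref{flat})--(\ref{HCL}) --- equivalently, after replacing $X$ by a flat representative of its symmetry class, which is available by the regularity (properness) of $Q$ --- so that the relations below take place in the flat subcomplex. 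Now $X$ Hamiltonian gives $\delta(i_X\omega)=L_X\omega\simeq 0$, so $i_X\omega$ is $\delta$-closed modulo $d$ and flat, and the triviality of the relative cohomology $H^{\ast,\ast}_{\mathrm{flat}}(J^\infty E;\delta/d)$ produces a flat $\alpha_1\in\Lambda^{0,m}(J^\infty E)$ with $i_X\omega\simeq\delta\alpha_1$, the flat representative being singled out by (\ref{HCL}).

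The second stage is a $\delta_Q$-descent off $\alpha_1$. Inductively, suppose a flat $\alpha_k$ of degree $m-k+1$ has been constructed with $\delta\alpha_k\approx i_X\omega_{k-1}$. Since $\alpha_k$ is purely horizontal, $\delta_Q\alpha_k=L_Q\alpha_k$ equals the contraction $i_Q\delta\alpha_k$ up to sign; the relation $\delta\alpha_k\approx i_X\omega_{k-1}$ then shows that $i_Q\delta\alpha_k$ agrees with $i_Qi_X\omega_{k-1}$ modulo $i_Q$-exact and $d$-exact terms, hence vanishes on $\Sigma^{\infty}$, and is moreover $\delta$-closed modulo $d$ and flat, hence $d$-exact by the acyclicity of the flat horizontal differential below $\dim M$. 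This delivers a flat $\alpha_{k+1}$ of degree $m-k$ with $\delta_Q\alpha_k=d\alpha_{k+1}$; a short computation using $[\delta_Q,i_X]=i_{[Q,X]}=0$ and $\delta_Q\omega_{k-1}=d\omega_k$ then gives $\delta\alpha_{k+1}\approx i_X\omega_k$, so the induction proceeds and terminates with $\alpha_m$ of degree $1$. For every $k\ge 2$ the conservation law is then immediate: $d\alpha_k=\delta_Q\alpha_{k-1}$ is $i_Q\delta\alpha_{k-1}$ up to sign, so $d\alpha_k|_{\Sigma^{\infty}}=0$.

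The hard part, and the only step that is not bookkeeping, will be the conservation of the top member $\alpha_1$ (of degree $m$): one has $\delta(d\alpha_1)=\pm\, i_X d\omega$, but $d\omega$ need not vanish on $\Sigma^{\infty}$, so this does not yield $d\alpha_1\approx 0$ by itself. Here I would bring in also $\delta_Q(d\alpha_1)=-d\,\delta_Q\alpha_1=-d\,d\alpha_2=0$, together with $d\omega=-\delta\,d\theta$ and the properness of $Q$, and run a zig-zag in the flat variational tricomplex to exhibit $\chi_1$ with $d\alpha_1=i_Q\chi_1$; equivalently, one chooses the representative $\alpha_1\simeq i_X\theta-K$ with $L_X\theta\simeq\delta K$ and shows that $d\alpha_1$ is proportional to the field equations carried by $Q$ --- the classical Noether mechanism. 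The uniqueness assertion I would settle by checking that every choice entering the construction (the potentials $\theta_k$, the representative of $\omega_k$ in $\delta_Q\omega_{k-1}=d\omega_k$, the integration ambiguities in the homological lemmas, and the flat representative of the class of $X$) shifts each $\alpha_k$ only by a $d$-exact plus an on-shell-vanishing form, hence leaves the class $[\alpha_k]\in H^{0,m-k+1}_{\mathrm{flat}}(\Sigma^{\infty};d)$ unchanged. Finally, for a trivial symmetry $X=[Q,Y]$ I would write $i_X\omega_{k-1}=i_{[Q,Y]}\omega_{k-1}=\delta_Qi_Y\omega_{k-1}\mp i_Y\,d\omega_k=\delta(i_Qi_Y\omega_{k-1})+i_QL_Y\omega_{k-1}+d(\cdots)$ and compare with $\delta\alpha_k\approx i_X\omega_{k-1}$ to conclude $\alpha_k\approx i_Qi_Y\omega_{k-1}\approx 0$ for every $k$; thus trivial symmetries generate trivial conservation laws.
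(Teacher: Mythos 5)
Your stage-two $\delta_Q$-descent is essentially the paper's mechanism, but the proposal contains a genuine gap caused by an off-by-one identification of the sequence. In the paper the Hamiltonian $\alpha_0$ (horizontal degree $m$, defined by $i_X\omega\simeq\delta\alpha_0$) is \emph{not} one of the conservation laws; the sequence starts with $\alpha_1$ of degree $m-1$ defined by $\delta_Q\alpha_0=d\alpha_1$ (Eq.~(\ref{aa})) and ends with a degree-zero form. You instead declare the Hamiltonian itself to be the first conservation law and stop the descent at degree one, omitting the degree-zero member (compare the Chern--Simons example: the sequence is $A^\ast, A, C$ of degrees $2,1,0$, while the Hamiltonian $C^\ast$ is not among them). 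The step you yourself flag as ``the hard part'' is in fact not provable in the theorem's generality: from $\delta(d\alpha_1)\approx \pm\, i_Xd\omega$ and $\delta_Q d\alpha_1=0$ one cannot conclude $d\alpha_1\approx 0$ unless one assumes precisely the hypotheses of Proposition~\ref{P34}, namely $d\omega|_{\Sigma^{\infty}}=0$ and $H^{0,\ast}_{\mathrm{flat}}(\Sigma^{\infty};\delta)=0$. Flatness of $Q$ gives neither; only the descendants $\omega_k$, $k\geq 1$, automatically satisfy condition (i) via Eq.~(\ref{Qw}), which is exactly why the paper states that proposition separately and why Theorem~\ref{th31} begins the sequence one step down. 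Your ``zig-zag/Noether mechanism'' works when $\omega$ has top horizontal degree (then $d\alpha_1=0$ trivially) but not for a general degree-$m$ presymplectic structure, which the theorem allows.

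There is a second gap in the triviality argument. From $\delta\alpha_k\approx\delta\bigl(i_Qi_Y\omega_{k-1}\bigr)+i_Q(\cdots)+d(\cdots)$ you infer $\alpha_k\approx i_Qi_Y\omega_{k-1}$, but this requires inverting $\delta$ on the shell, i.e.\ the vanishing of $H^{0,\ast}_{\mathrm{flat}}(\Sigma^{\infty};\delta)$ — again condition (ii) of Proposition~\ref{P34}, which is not assumed and which the paper explicitly warns can fail even for vector bundles. The paper's own argument stays off-shell: for $X=[Q,Y]$ it uses the triviality of $H^{0,m}_{\mathrm{flat}}(J^\infty E;\delta/d)$ to write $\alpha_0=-\delta_Q\beta+d\gamma$, then Eq.~(\ref{aa}) together with the triviality of $H^{0,m-1}_{\mathrm{flat}}(J^\infty E;d)$ gives $\alpha_1=\delta_Q\gamma+d(\cdots)\approx 0$, since $\delta_Q\gamma=i_Q\delta\gamma$ vanishes on $\Sigma^{\infty}$; the descendants are then trivial as well. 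Relatedly, in your induction you conflate the two off-shell lemmas actually needed: at the top step one uses the relative flat $\delta$-modulo-$d$ acyclicity (the form $\delta_Q\alpha_0$ has degree $m$, possibly top), while at the subsequent steps one uses $d\,\delta_Q\alpha_k=0$ plus flat $d$-acyclicity below $\dim M$; being ``$\delta$-closed modulo $d$'' does not by itself yield $d$-exactness, and the on-shell vanishing you invoke is irrelevant to producing the next $\alpha_{k+1}$ off-shell.
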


\begin{proof} Let $\alpha_0$ denote the Hamiltonian of the symmetry $X$, then 
\begin{equation}\label{a0}
    i_X\omega\simeq \delta \alpha_0\,.
\end{equation}
Applying $\delta_Q$ to this equality yields  $\delta \delta_Q\alpha_0\simeq 0$. Since the form $\delta_Q\alpha_0$ is flat,  we can write it as 
\begin{equation}\label{aa}
\delta_Q\alpha_0=d\alpha_1
\end{equation}
for some $\alpha_1\in \Lambda^{0,m-1}(J^\infty E)$. By definition, $\alpha_1$ is a conservation law with characteristic $\delta\alpha_0$. It follows from Eq. (\ref{aa}) that $d\delta_Q\alpha_1=0$. Since $\delta_Q \alpha_1$ is flat, the triviality of the group $H^{0,m-1}_{^\mathrm{flat}}(J^\infty E; d)$ implies the existence of a form $\alpha_2$ such that $\delta_Q \alpha_1=d\alpha_2$. Hence, $\alpha_2$ is a conservation law with characteristic $\delta \alpha_1$. Iterating this construction once and again, we get the desired sequence of conservation laws $\alpha_1, \alpha_2,\ldots,\alpha_m$. These are not claimed to be all nontrivial; quite the opposite: if the $k$-th conservation law happens to be trivial,  $\alpha_k\approx 0$, then so are all its descendants $\alpha_{k+1}, \ldots, \alpha_m$.   

If now $X=[Q,Y]$ for some Hamiltonian vector field $Y$, then $$\delta\alpha_0\simeq i_X\omega\simeq \delta_Qi_Y\omega\simeq -\delta \delta_Q\beta\,,$$ where $\beta$ is the Hamiltonian of $Y$. The group $H^{0,m}_{^\mathrm{flat}}(J^\infty E;\delta/d)$ being trivial, $\alpha_0=-\delta_Q\beta+d\gamma$ for some $\gamma\in \Lambda^{0,m-1}(J^\infty E)$. Eq. (\ref{aa}) then gives $d(\delta_Q\gamma-\alpha_1)=0$. Since the group $H^{0,m-1}_{^\mathrm{flat}}(J^\infty E;d)$ is zero, we conclude that  $\alpha_1\approx 0$ and the conservation law $\alpha_1$ is trivial together with all its descendants.   
\end{proof}

It is worth noting that the assumption of flatness is rather technical; in many cases, one can replace it with other mild conditions that `kill' the de Rham cohomology of the base manifold $M$.  In particular, the above theorem holds for any nonflat $Q$ whenever $M$ is contractible.

\begin{theorem}[\cite{Sharapov:2016sgx}]\label{th32}
Let $\{\alpha_k\}$ be a sequence of conservation laws associated with a Hamiltonian symmetry $X$. Then the form $\alpha_k$ is on-shell Hamiltonian relative to the $k$-th descendent presymplectic structure $\omega_k$. 
\end{theorem}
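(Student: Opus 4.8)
The plan is to establish, by induction on $k$, the identity
\[
i_X\omega_k-(-1)^k\delta\alpha_k=d\mu_k+\delta_Q\xi_k
\]
for suitable forms $\mu_k,\xi_k$ of vertical degree one, and then to read off the assertion by passing to the shell. Granting the identity, on $\Sigma^\infty$ the term $d\mu_k$ drops out, while $\delta_Q\xi_k=i_Q\delta\xi_k+\delta(i_Q\xi_k)\approx\delta(i_Q\xi_k)$ because anything in the image of $i_Q$ vanishes on $\Sigma^\infty$; hence $i_X\omega_k\approx\delta\big((-1)^k\alpha_k+i_Q\xi_k\big)$. As $(-1)^kX$ is again a symmetry of $(Q,\omega_k)$ and $i_Q\xi_k\approx 0$, this exhibits $\alpha_k$, up to the equivalence (\ref{ERR}), as an on-shell Hamiltonian form for the descendent presymplectic structure $\omega_k$, which is the claim.

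For the base case $k=0$ the identity is nothing but the relation $i_X\omega_0=\delta\alpha_0+d\mu_0$ defining the Hamiltonian $\alpha_0$ of $X$ (take $\xi_0=0$), already used in the proof of Theorem \ref{th31}. For the inductive step I would apply the Lie derivative $\delta_Q$ to the level-$k$ identity. On the left, $[X,Q]=0$ gives $\delta_Q i_X=(-1)^{\mathrm{gh}(X)}i_X\delta_Q$, and combining this with $\delta_Q\omega_k=d\omega_{k+1}$ and the evolutionarity condition $i_Xd+(-1)^{\mathrm{gh}(X)}di_X=0$ one obtains $\delta_Q i_X\omega_k=-d(i_X\omega_{k+1})$. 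The relations $\delta_Q\delta=-\delta\delta_Q$ and $\delta_Q\alpha_k=d\alpha_{k+1}$ give $\delta_Q\delta\alpha_k=d\delta\alpha_{k+1}$, while $\delta_Q d\mu_k=-d\delta_Q\mu_k$ and $\delta_Q^2\xi_k=0$. Collecting terms, both sides become $d$ of something, and cancelling the outer $d$ yields
\[
d\big(i_X\omega_{k+1}-(-1)^{k+1}\delta\alpha_{k+1}-\delta_Q\mu_k\big)=0 .
\]
The form $\Psi_{k+1}$ in brackets has vertical degree one and horizontal degree $m-k-1<n$; since any form of positive vertical degree vanishes on the zero section it is flat, so the acyclicity $H^{\ast,q}_{^{\mathrm{flat}}}(J^\infty E;d)=0$ for $q<n$ forces $\Psi_{k+1}=d\mu_{k+1}$ for some flat $\mu_{k+1}$. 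This is precisely the level-$(k+1)$ identity with $\xi_{k+1}=\mu_k$, which closes the induction. (This is the same descent mechanism that drives Theorem \ref{th31}, now run one rung up in vertical degree; the degenerate case $\omega_k\simeq 0$ is immediate.)

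The one genuinely delicate part is the sign bookkeeping in the graded Cartan-type identities — in particular getting $\delta_Q i_X\omega_k=-d(i_X\omega_{k+1})$ right — and checking that the residual $\delta_Q$-exact piece really collapses on the shell to $\delta$ of an on-shell-trivial form, so that it does not spoil $\alpha_k$ being an honest Hamiltonian relative to $\omega_k$. Beyond this I anticipate no real obstacle, since the only cohomological input is the acyclicity of the flat variational bicomplex already invoked in Theorem \ref{th31}.
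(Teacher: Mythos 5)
Your proposal is correct and follows essentially the same route as the paper's proof: apply $\delta_Q$ inductively to the relation $i_X\omega_k=\pm\delta\alpha_k+\delta_Q(\text{previous correction})+d(\text{correction})$, use $\delta_Q\omega_k=d\omega_{k+1}$, $\delta_Q\alpha_k=d\alpha_{k+1}$ and the acyclicity of $d$ in positive vertical degree to strip the outer $d$, and then observe that the residual $\delta_Q$-exact term reduces on the shell (via Cartan's formula and the vanishing of $i_Q$ there) to $\delta$ of an on-shell-trivial form. The alternating sign $(-1)^k$ you carry is a harmless convention difference from the paper's bookkeeping, since it can be absorbed by rescaling $X$ and does not affect the conclusion that $\alpha_k$ is on-shell Hamiltonian relative to $\omega_k$.
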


\begin{proof}
As above, we let $\alpha_0$ denote the Hamiltonian of $X$. Then Eq. (\ref{a0}) implies that 
\begin{equation}
    i_X\omega=\delta \alpha_0+d\alpha'_0
\end{equation}
for some $\alpha'_0\in \Lambda^{1,m-1}(J^\infty E)$. Acting by $\delta_Q$ on both sides of the equality, we find
\begin{equation}\label{eq1}
    (-1)^{\mathrm{gh}(X)}i_X\delta_Q\omega= \delta \delta_Q\alpha_0+d\delta_Q\alpha'_0\,.
\end{equation}
Combining this with Eqs. (\ref{Qw}) and (\ref{aa}), we get 
$
    di_X\omega_1=d\delta\alpha_1-d\delta_Q\alpha'_0
$. The acyclicity  of $d$ in positive vertical degree implies that
\begin{equation}\label{om1}
       i_X\omega_1=\delta\alpha_1-\delta_Q\alpha'_0+d\alpha'_1
\end{equation}
for some $\alpha_1'\in\Lambda^{1,m-2}(J^\infty E)$. Therefore, $i_X\omega_1\approx \delta \alpha_1$ and the form $\alpha_1$ is on-shell Hamiltonian relative to the descendent presymplectic structure $\omega_1$. Applying the operator $\delta_Q$ once again, we obtain from (\ref{om1})
\begin{equation}
      (-1)^{\mathrm{gh}(X)}i_X\delta_Q\omega_1= \delta \delta_Q\alpha_1+d\delta_Q\alpha'_1\,.
\end{equation}
This equation coincides in form with (\ref{eq1}). Hence, there is  a form $\alpha'_2$ such that 
\begin{equation}\label{om2}
       i_X\omega_2=\delta\alpha_2-\delta_Q\alpha'_1+d\alpha'_2
\end{equation}
and the conservation law $\alpha_2$ is on-shell Hamiltonian w.r.t. $\omega_2$. Proceeding in such a way, we obtain the sequence of relations 
$i_X\omega_k=\delta\alpha_k-\delta_Q\alpha'_{k-1}+d\alpha'_k$, which imply that all forms $\alpha_k$ are on-shell Hamiltonian. 

\end{proof}

\begin{theorem}\label{th33}
Let $\{\alpha_k\}$ be the sequence of flat conservation laws associated with a Hamiltonian symmetry $X$, then the assignment 
$X\mapsto\alpha_k$ defines a Lie algebra homomorphism 
\begin{equation}\label{h}
h_k: \mathrm{Sym}(Q,\omega)\rightarrow \hat \Lambda^H(Q,\omega_k)\,.
\end{equation}
\end{theorem}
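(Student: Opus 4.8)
The plan is to exhibit, for the commutator $[X,Y]$ of two Hamiltonian symmetries, an explicit representative of the $k$-th conservation law of its Noether sequence and to recognise this representative as the Poisson bracket $\{\alpha_k,\beta_k\}$ taken with respect to $\omega_k$.

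First I would check that $\mathrm{Sym}(Q,\omega)$ is closed under the commutator, so that Theorems~\ref{th31} and~\ref{th32} apply to $[X,Y]$. Since $[X,Q]=[Y,Q]=0$, the graded Jacobi identity gives $[[X,Y],Q]=0$; and from the Cartan identity relating $i_{[X,Y]}$ to $\mathcal{L}_X$ and $i_Y$ (up to signs), together with $\mathcal{L}_X\omega\simeq 0$, $i_Y\omega\simeq\delta\beta_0$, one gets $i_{[X,Y]}\omega\simeq\delta\bigl((-1)^{\mathrm{gh}(X)}i_Xi_Y\omega\bigr)$, i.e. $[X,Y]$ is Hamiltonian with Hamiltonian $c_0:=(-1)^{\mathrm{gh}(X)}i_Xi_Y\omega=\{\alpha_0,\beta_0\}$. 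By Theorem~\ref{th31}, running the descent $\delta_Q\gamma_{j-1}=d\gamma_j$ from any Hamiltonian of $[X,Y]$ produces the uniquely defined classes $[\gamma_j]$, and by definition $h_k([X,Y])=[\gamma_k]$.

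Next I would propagate the specific choice $c_j:=(-1)^{\mathrm{gh}(X)}i_Xi_Y\omega_j$ along the descent, with $\omega_0=\omega$. Three facts are used: $\mathcal{L}_Q$ (anti)commutes with $i_X$ and $i_Y$ because $X,Y$ commute with $Q$; the operators $i_X,i_Y$ (anti)commute with $d$ because $X,Y$ are evolutionary; and $\delta_Q\omega_{j-1}=d\omega_j$ by construction of the descendent presymplectic structures. Combining them — this is where the sign bookkeeping lives — gives $\delta_Q c_{j-1}=d\,c_j$. Hence $\{c_j\}$ is precisely the descent of Theorem~\ref{th31} for $[X,Y]$ emanating from the Hamiltonian $c_0$, so $[c_k]=h_k([X,Y])$ in $\hat\Lambda^H(Q,\omega_k)$, the residual ambiguities ($d$-exact, on-shell vanishing, and $\pi^\ast_\infty$-central forms) being exactly those that are killed in $\hat\Lambda^H$.

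It remains to identify $c_k$ with $\{\alpha_k,\beta_k\}$. By Theorem~\ref{th32}, $X$ and $Y$ are on-shell Hamiltonian vector fields for $\alpha_k$ and $\beta_k$ relative to $\omega_k$ ($i_X\omega_k\approx\delta\alpha_k$, $i_Y\omega_k\approx\delta\beta_k$); substituting $X_{\alpha_k}=X$, $X_{\beta_k}=Y$ into (\ref{LB}) gives $\{\alpha_k,\beta_k\}\approx(-1)^{\mathrm{gh}(X)}i_Xi_Y\omega_k=c_k$, so $\{h_k(X),h_k(Y)\}=[c_k]=h_k([X,Y])$. Linearity of $h_k$ is immediate, since the Hamiltonian $\alpha_0$ of $X$ and each descent step $\alpha_{j-1}\mapsto\alpha_j$ depend linearly on the input modulo ambiguities that vanish in $\hat\Lambda^H(Q,\omega_j)$, and Theorems~\ref{th31}, \ref{th32} guarantee that $[\alpha_k]$ genuinely lies in $\hat\Lambda^H(Q,\omega_k)$. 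I expect the only real obstacle to be the two Cartan-type sign computations — $i_{[X,Y]}\omega\simeq\delta c_0$ and $\delta_Q c_{j-1}=d c_j$ — which must be carried out scrupulously in the grading conventions of the variational tricomplex; a secondary point, responsible for working in $\hat\Lambda^H$ rather than $\Lambda^H$, is that $c_k$ (hence $\{\alpha_k,\beta_k\}$) and the flat conservation law $\gamma_k$ may differ by a $\pi^\ast_\infty$-central (topological) term, which is precisely the cocycle of the central extension (\ref{ES}).
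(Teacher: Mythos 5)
Your proposal is correct and follows essentially the paper's own argument: both proofs exhibit the representatives $(-1)^{\mathrm{gh}(X)}i_Xi_Y\omega_k=\{\alpha_k,\beta_k\}$ as satisfying the descent equations, hence as the (flat, uniquely defined up to the central part killed in $\hat\Lambda^H(Q,\omega_k)$) sequence of conservation laws attached to the Hamiltonian symmetry $[X,Y]$ with Hamiltonian $\{\alpha_0,\beta_0\}$. The only cosmetic difference is that the paper derives $\delta_Q\{\alpha_k,\beta_k\}=d\{\alpha_{k+1},\beta_{k+1}\}$ by applying $L_X$ to the equations $\delta_Q\beta_k=d\beta_{k+1}$, whereas you apply $\delta_Q$ directly to $i_Xi_Y\omega_k$ and use $\delta_Q\omega_k=d\omega_{k+1}$ together with the (anti)commutation of $i_X,i_Y$ with $\delta_Q$ and $d$ --- computationally the same step.
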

\begin{proof} For every pair of Hamiltonian symmetries $X, Y\in \mathrm{Sym}(Q,\omega)$ there corresponds 
a Hamiltonian symmetry $[X,Y]$ satisfying the equation 
$$
i_{[X,Y]}\omega\simeq \delta \{\alpha_0,\beta_0\}\,.
$$
Here $\alpha_0$, $\beta_0$, and $\{\alpha_0,\beta_0\}=(-1)^{\mathrm{gh}(X)}i_Xi_Y\omega$ are the Hamiltonians of $X$, $Y$, and $[X,Y]$, respectively. 
Let $\{\alpha_k\}$ and $\{\beta_k\}$ denote the sequences of conservation laws associated with $X$ and $Y$. By definition, $$\delta_{Q}\alpha_k=d\alpha_{k+1}\,,\qquad \delta_{Q}\beta_k=d\beta_{k+1}\,,\quad k=0,1,\ldots,m\,.$$ Applying the Lie derivative $L_X$ to the second sequence of equations, we get 
\begin{equation}
    \delta_Q L_{X}\beta_k=dL_X\beta_{k+1}\quad \Rightarrow \quad \delta_Q\{\alpha_k,\beta_k\}=d\{\alpha_{k+1},\beta_{k+1}\}\,,
\end{equation}
where $\{\alpha_k,\beta_k\}=(-1)^{\mathrm{gh}(X)}i_Xi_Y\omega_k$. For $k>0$, this yields a uniquely defined sequence $\{\alpha_k,\beta_k\}$ of flat conservation laws associated with the Hamiltonian symmetry $[X,Y]$.  
\end{proof}

The homomorphism (\ref{h}) defines a Lie subalgebra $\mathrm{Im}\,h_k\subset \hat\Lambda^H(Q,\omega_k)$. By the {\it $k$-th Lie algebra of conservation laws}, denoted by $\mathrm{CL}_k(Q,\omega)$, we will understand the preimage of $\mathrm{Im}\,h_k$ in $\Lambda^H(Q,\omega_k)$, that is, $\mathrm{CL}_k(Q,\omega)=p^{-1}(\mathrm{Im}\,h_k)$. In view of (\ref{ES}), the Lie algebra $\mathrm{CL}_k(Q,\omega)$ is nothing more than a central extension of $\mathrm{Im}\,h_k$. 
A necessary condition for the existence of a nontrivial central extension of $\mathrm{Im}\,h_k$ is that $H^{m-k}(M;d)\neq 0$. 

One more version of  Noether's first theorem is given by the next statement. 

\begin{proposition}\label{P34}
Let $(Q,\omega)$ be a regular gauge system. Suppose the following two conditions are satisfied:
\begin{itemize}
    \item[{\rm (i)}] $d\omega|_{\Sigma^{^\infty}}=0$,
    \item[\rm (ii)] $H_{\rm flat}^{0,\ast}(\Sigma^{\infty}; \delta)=0$;
\end{itemize}
then every on-shell Hamiltonian form is a conservation law and Eq. (\ref{HH}) establishes a homomorphism $h: \mathrm{Sym}_\Sigma(Q,\omega)\rightarrow \Lambda^H(Q,\omega)$  of the Lie algebras. 
\end{proposition}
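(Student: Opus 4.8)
The plan is to treat the two assertions separately: the first by a short chase in the variational tricomplex, the second by following the pattern already used in the proof of Theorem~\ref{th33}. For the first, fix $X\in\mathrm{Sym}_\Sigma(Q,\omega)$ with on-shell Hamiltonian form $\alpha\in\Lambda^{0,m}(J^\infty E)$, so that $i_X\omega\approx\delta\alpha$ by Eq.~(\ref{HH}), and apply the horizontal differential to this relation. Since $d$ anticommutes with $\delta$ we have $d(\delta\alpha)=-\delta(d\alpha)$; and since $X$ is evolutionary, $i_X$ anticommutes with $d$, so $d(i_X\omega)=\pm\,i_X(d\omega)$. Because $X$, being a symmetry, is tangent to $\Sigma^\infty$, hypothesis (i) gives $d(i_X\omega)|_{\Sigma^\infty}=\pm\,i_X\big(d\omega|_{\Sigma^\infty}\big)=0$. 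As $d$ commutes with restriction to the shell (a prolongation, so the total derivatives are tangent to it), comparing the two sides of $i_X\omega\approx\delta\alpha$ after applying $d$ yields $\delta(d\alpha)|_{\Sigma^\infty}=-d(\delta\alpha)|_{\Sigma^\infty}=-d(i_X\omega)|_{\Sigma^\infty}=0$: the form $d\alpha|_{\Sigma^\infty}$ is $\delta$-closed and of vertical degree zero.

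Here hypothesis (ii) enters. Working, as the paper does throughout, with the flat representative $\hat\alpha$ (equivalently, discarding the topological summand in~(\ref{flat})), $d\hat\alpha$ is again flat, because the zero section intertwines $d$ with the de Rham differential on $M$. Hence $d\hat\alpha|_{\Sigma^\infty}$ represents a class in $H^{0,m+1}_{\mathrm{flat}}(\Sigma^\infty;\delta)$, which vanishes by (ii); since there are no $\delta$-coboundaries in vertical degree zero, this forces $d\hat\alpha|_{\Sigma^\infty}=0$, that is, $\hat\alpha$ is a conservation law of degree $m$. This is the content of the first claim.

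For the homomorphism, let $X,Y\in\mathrm{Sym}_\Sigma(Q,\omega)$ have on-shell Hamiltonians $\alpha,\beta$. Then $[X,Y]$ is a symmetry by the Jacobi identity together with $[X,Q]=[Y,Q]=0$. Using $i_{[X,Y]}=[L_X,i_Y]$, the relation $L_X\omega\approx0$ (which makes $i_Y(L_X\omega)\approx0$, by tangency of $Y$ to the shell and the evolutionary property of $Y$), and the facts that for evolutionary $X$ the Lie derivative $L_X$ preserves the relation $\approx$ and commutes with $\delta$, I would compute $i_{[X,Y]}\omega\approx L_X(i_Y\omega)\approx L_X(\delta\beta)=\delta(L_X\beta)$. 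Since $\beta$ is purely horizontal, $i_X\beta=0$, hence $L_X\beta=i_X\delta\beta\approx i_Xi_Y\omega$, which is $\{\alpha,\beta\}$ up to the sign fixed by~(\ref{LB}). This simultaneously shows $[X,Y]\in\mathrm{Sym}_\Sigma(Q,\omega)$ and that the assignment $h:X\mapsto[\alpha]$ obeys $h([X,Y])=\{h(X),h(Y)\}$ in $\Lambda^H(Q,\omega)$, the sign bookkeeping being exactly as in the proof of Theorem~\ref{th33}.

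The routine part is the pair of diagram chases. The delicate points will be, first, that Eq.~(\ref{HH}) fixes $\alpha$ only modulo the kernel of $\delta$ in vertical degree zero, that is, modulo the central (topological) subalgebra $\pi_\infty^\ast\Lambda(M)$ of~(\ref{ES}), so that $h$ is well defined on equivalence classes only once this central ambiguity is accounted for; and second, recognising that hypotheses (i) and (ii) are precisely calibrated to the two stages above --- (i) makes $d(i_X\omega)$ vanish on the shell, and (ii) upgrades ``$\delta(d\alpha)|_{\Sigma^\infty}=0$'' to ``$d\alpha|_{\Sigma^\infty}=0$'', without which $d\alpha$ could restrict to a nonzero $\delta$-closed flat form on $\Sigma^\infty$.
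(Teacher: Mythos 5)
Your argument for the first claim is essentially the paper's own: apply $d$ to Eq.~(\ref{HH}), show that $i_Xd\omega$ vanishes on the shell, deduce $\delta(d\alpha)|_{\Sigma^{\infty}}=0$, and invoke hypothesis (ii); the only (harmless) deviation is that you justify the on-shell vanishing of $i_Xd\omega$ by the tangency of the symmetry $X$ to $\Sigma^{\infty}$, whereas the paper uses regularity to write $d\omega=i_Q\beta+\delta_Q\gamma$ and then $[Q,X]=0$ --- both routes are valid. Your expanded treatment of the homomorphism property (which the paper declares obvious) and your remark about the central ambiguity of $\alpha$ coming from $\pi_\infty^\ast\Lambda(M)$ are consistent with the paper's framework and do not change the substance of the proof.
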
 

\begin{proof} The second part of the statement is obvious. 
To prove the first consider  an on-shell Hamiltonian vector field $X$ with Hamiltonian $\alpha\in \Lambda^H(Q,\omega)$. Applying the horizontal 
differential $d$ to both sides of Eq. (\ref{HH}), we get 
\begin{equation}
    \delta d\alpha|_{\Sigma^{^\infty}}=(-1)^{\mathrm{gh} X}i_X d\omega|_{\Sigma^{^\infty}}\,.
\end{equation}
Regularity implies that $d\omega =i_Q\beta+\delta_Q\gamma$ for some forms $\beta$ and $\gamma$, whence 
\begin{equation}
    i_Xd\omega=i_Qi_X\beta -(-1)^{\mathrm{gh} X}\delta_Qi_X\gamma\,.
\end{equation}
Here we used the equality $[Q,X]=0$. Therefore,  $\delta (d\alpha)|_{\Sigma^{^\infty}}=0$. Condition (ii) then means that
$d\alpha|_{\Sigma^{^\infty}}=0$, i.e.,  $\alpha$ is a flat conservation law. 
\end{proof}

Notice that all descendent presymplectic structures $\omega_k$ meet condition (i), see Eq. (\ref{Qw}). Condition (ii) is fulfilled  e.g. for  homogeneous linear systems of field equations; in that case, the components of the homological vector field $Q$ are homogeneous linear functions of the vertical coordinates $\phi^a_{ij\cdots}$.

\begin{example}

All the notions  above are best illustrated by the example of Chern--Simons theory. Let us consider a $U(1)$-vector bundle  over a three-dimensional manifold $M$. For simplicity, we assume that the bundle is  trivial. Then the affine space of all $U(1)$-connections  is naturally isomorphic to the space of $1$-forms $\Lambda^1(M)$, that is, the sections of the cotangent bundle $T^\ast M$.  
In the Batalin--Vilkovisky (BV) formalism\footnote{aka {\it field-antifield formalism}, see e.g. \cite{BARNICH2000439}.}, the field spectrum of abelian Chern--Simons theory includes the connection $1$-form  $A\in \Lambda^1(M)$, ghost field $C\in \Lambda^0(M)$ and their antifields $A^\ast\in \Lambda^2(M)$ and $C^\ast\in \Lambda^3(M)$.  By definition, 
\begin{equation}
    \mathrm{gh}(C)=1\,,\qquad  \mathrm{gh} (A)=0\,,\qquad  \mathrm{gh}(A^\ast)=-1\,,\qquad  \mathrm{gh}(C^\ast)=-2\,.
\end{equation}
The space of fields and antifields carries  the canonical symplectic structure 
\begin{equation}\label{w-ChS}
    \omega =\delta A\wedge \delta A^\ast +\delta C\wedge \delta C^\ast \,,\qquad \mathrm{gh}(\omega)=-1\,,
\end{equation}
of top horizontal degree. Here $\delta$ stands for the usual variational differential on fields, whose properties are identical to those of the vertical differential above, hence the notation. The action of the classical BRST differential is given by 
\begin{equation}\label{ChS-Q}
    \delta_Q C=0\,,\qquad \delta_Q A=dC\,,\qquad \delta_QA^\ast=dA\,,\qquad \delta_Q C^\ast=dA^\ast\,.
\end{equation}
As is seen, only flat connections belong to the stationary surface of $Q$.  The zero-curvature equation $dA=0$ enjoys the standard gauge symmetry $\delta_\varepsilon A=d\varepsilon$, which  manifests itself through the second equation in (\ref{ChS-Q}). The Hamiltonian of the homological vector field $Q$ is given by the BV master Lagrangian of Chern--Simons theory:
\begin{equation}\label{L-ChS}
    i_Q\omega\simeq \delta L\,,\qquad L=\frac12 A\wedge dA+dC\wedge A^\ast\,.
\end{equation}
Thus, Eqs. (\ref{w-ChS}, \ref{ChS-Q}) define a gauge system of degree three.  

The canonical symplectic structure (\ref{w-ChS})  generates the full sequence of descendants:   
\begin{equation}
    \begin{array}{lll}
        \delta_Q\omega=d\omega_1\,,\qquad\qquad &\omega_1=\frac12\delta A\wedge \delta A+\delta C\wedge \delta A^\ast\,,\qquad \qquad &\mathrm{gh}(\omega_1)=0\,,\\[3mm]
        \delta_Q\omega_1=d\omega_2\,,\qquad &\omega_2=\delta C\wedge \delta A\,,\quad & \mathrm{gh}(\omega_2)=1\,,\\[3mm]
        \delta_Q\omega_2=d\omega_3\,,\qquad &\omega_3=\frac12\delta C\wedge \delta C\,,\quad&\mathrm{gh}(\omega_3)=2\,.
    \end{array}
\end{equation}
Hence, the length of Chern--Simons theory is four. Notice that $\omega_1$ coincides with the canonical presymplectic structure associated with the Lagrangian (\ref{L-ChS}). The classical BRST differential enjoys a nontrivial global symmetry $Y$ in ghost number $-1$, whose action on fields is given by  
\begin{equation}
 L_Y C=1\,,\qquad L_Y A=0\,,\qquad  L_YA^\ast=0\,,\qquad L_Y C^\ast  =0\,.
\end{equation}
The variational vector field $Y$ is Hamiltonian, $i_Y\omega =\delta C^\ast$, and gives rise to the sequence of conservation laws $A^\ast$, $A$, and $C$. The latter fact is readily seen  from (\ref{ChS-Q}).  The corresponding Lie brackets read
\begin{equation}
    \{A^\ast, A^\ast\}_1=0\,,\qquad  \{A, A\}_2=0\,,\qquad  \{C, C\}_3=1\,.
\end{equation}
The last equality exemplifies the phenomenon of central extension:  the abelian subalgebra $[Y,Y]=0$ of global symmetries gets a nontrivial central extension $1\in H^0(M;d)$ when evaluated on the corresponding conservation laws. 
Among the conservation laws above only the $1$-form $A$ has ghost number zero. Integrating it over a loop $\gamma\subset M$ yields the conserved charge $I[A]=\oint_\gamma A $, which is nothing but the holonomy invariant of the flat connection $A$. 

\end{example}
\section{Leibniz algebras of symmetries and conservation laws}
Most of the gauge systems encountered in physics depend on some numerical parameters such as masses of particles, coupling constants, etc.  This motivates us to consider families of gauge systems rather than isolated systems. For the sake of simplicity we will restrict ourselves to one-parameter families $(Q_t,\omega_t)$.  Also, it is convenient to be a bit sloppy about the class of functions of $t$. The homological vector field $Q_t$, for instance, may be a smooth function of $t$ or a formal power series 
\begin{equation}\label{Qt}
    Q_t=Q_0+tQ_1+t^2Q_2+\cdots\,.
\end{equation}
Such formal expansions are at the heart of perturbation theory:   If one regards  $t$ as a coupling constant, then the leading term $Q_0$ defines a `free gauge system' $(Q_0,\omega_0)$, while the higher order terms $Q_1, Q_2, \ldots$ describe a `consistent  interaction'. In many instances, the corresponding presymplectic form $\omega_t$ does not depend on $t$ at all. 

Even though the gauge system $(Q_t,\omega_t)$ varies `smoothly' with $t$, this may not be the case with the  corresponding cohomology groups.  
For example, the inclusion of interaction may violate some nontrivial symmetries of a free gauge system. It is the differences in cohomology for different $t$'s that give rise to interesting algebraic constructions that we consider below. 

\paragraph{Symmetries.} 
Differentiating the defining equality $[Q_t,Q_t]=0$, we obtain
\begin{equation}\label{QdQ}
  [Q_t,\dot Q_t]=0\,,\qquad  [\dot Q_t, \dot Q_t]=-[Q_t,\ddot Q_t] \,,
\end{equation}
where the overdots stand for the $t$-derivatives. The first equation says that the evolutionary vector field 
$\dot Q_t$ is $Q_t$-invariant; hence, it defines an odd symmetry of the gauge system $(Q_t,\omega_t)$.  According to the second equation, the symmetry $\dot Q_t$ always squares to a trivial symmetry. 
Thus, our first observation is that any one-parameter family of gauge systems possesses a canonical (perhaps trivial) symmetry  generated by $\dot Q_t$. Furthermore, this symmetry makes  $\mathrm{Sym}(Q_t)$
into a differential graded Lie algebra with the deferential 
\begin{equation}
    \partial_t \tilde X=\widetilde{[\dot Q_t, X]}\,, \qquad \partial_t^2=0\,.
\end{equation}
Here $\tilde{X}$ denotes the equivalence class of the symmetry generated by a $Q$-invariant vector field $X\in \mathfrak{X}_{\mathrm{ev}}(J^\infty E)$. 

The main purpose of this section is to equip the space of symmetries $\mathrm{Sym}(Q_t)$ with the structure of a { graded Leibniz algebra}. The latter is defined as follows. 

\begin{definition}

A {\it graded Leibniz algebra} is a graded vector space $V=\bigoplus_{n\in \mathbb{Z}} V_n$ together with a bilinear product
\begin{equation}
    \circ: V_n\otimes V_m\rightarrow V_{n+m}
\end{equation}
satisfying the Leibniz identity 
\begin{equation}\label{LI}
    a\circ (b\circ c)=(a\circ b)\circ c+(-1)^{|a||b|}b\circ (a\circ c)
\end{equation}
for all $a,b,c\in V$; here $|a|$ denotes the degree of a homogeneous element $a\in V$.
\end{definition}

It is convenient to split the $\circ$-product into its graded symmetric and skew-symmetric parts\footnote{In the following, we will often omit the word `graded'.}:
\begin{equation}
\begin{array}{c}
    a\circ b=\lb a,b\rb+\lbb a, b\rbb \,,\\[5mm]
    \displaystyle \lb a,b\rb=\frac12\big (a\circ b-(-1)^{|a||b|}b\circ a\big)\,,\qquad \lbb a,b\rbb=\frac12\big (a\circ b+(-1)^{|a||b|}b\circ a\big )\,.
    \end{array}
\end{equation}
Then it follows from the Leibniz identity (\ref{LI}) that the subspace $I\subset V$ spanned by all elements of the form  $\lbb a,b\rbb$ is an ideal of the Leibniz algebra $(V,\circ)$. Moreover, $I\circ V=0$.  The condition $I=0$ is clearly equivalent to the skew-symmetry of the $\circ$-product. In that case, Rel. (\ref{LI}) boils down to the Jacobi identity for the Lie bracket $\lb a,b\rb =a\circ b$. Therefore,
a graded Leibniz algebra with skew-symmetric product  is the same as a graded Lie algebra and vice versa.
In general,  the  skew-symmetric part of the $\circ$-product does not satisfy the Jacobi identity, but it always induces a Lie bracket in the quotient space $V/I$.  

Central to our construction is a functor 
\begin{equation}\label{F}
    \mathbf{F}: \mathbf{dgLie}\rightarrow \mathbf{gLeib} 
\end{equation}
from the category of differential graded Lie algebras to that of graded Leibniz algebras. It is defined in terms of the so-called {\it derived bracket} \cite{Kosmann_Schwarzbach_2004}.  Given a differential graded Lie algebra $L=\bigoplus_{n\in \mathbb{Z}}L_n$ with bracket $[-,-]$ of degree $n$ and differential $\partial$ of degree $1$, we write 
\begin{equation}\label{DB}
    a\circ b =(-1)^{|a|-n-1}[\partial a, b]\,.
\end{equation}
It is straightforward to check the $\circ$-product makes the graded vector space $L[-n-1]$ into a Leibniz algebra\footnote{For every $\mathbb{Z}$-graded vector space $V=\bigoplus V_n$ and $k\in \mathbb{Z}$, $V[k]$ is a graded vector space with $V[k]_n=V_{n+k}$.}. One can also see that the symmetric part
of the $\circ$-product (\ref{DB}) is in the image of the differential:
\begin{equation}\label{sDB}
    \lbb a,b\rbb=\frac12(-1)^{|a|-n-1}\partial [a,b]\,.
\end{equation}
As a consequence, the skew-symmetric part of $\circ$ induces a Lie bracket in the quotient space $L/\partial L$.  Furthermore, the differential $\partial$ defines a homomorphism of the Leibniz algebras $(L[-n-1],\circ)$ and $(L,[-,-])$. 

Applying the construction of the derived bracket (\ref{DB}) to our situation, we can turn the differential graded Lie algebra of symmetries $(\mathrm{Sym}(Q_t), \partial_t)$ into a graded Leibniz algebra w.r.t. the product 
\begin{equation}\label{lp}
    \tilde X\circ_t \tilde Y=(-1)^{\mathrm{gh}(X)-1}\widetilde{[[\dot Q_t, X], Y]}=\widetilde{[[ X, \dot Q_t], Y]}\,,
\end{equation}
$X$ and $Y$ being $Q_t$-invariant evolutionary vector fields. We will denote this Leibniz algebra by $\mathbf{Sym}(Q_t)$; as a graded vector space $\mathbf{Sym}(Q_t)=\mathrm{Sym}(Q_t)[-n-1]$. In general, the product (\ref{lp}) is not skew-symmetric and $\mathbf{Sym}(Q_t)$ is not a Lie algebra. 
An important particular situation when it does degenerate to a Lie algebra structure is the following. Let $\mathcal{L}\subset \mathrm{Sym}(Q_t)$ be a {\it commutative} subalgebra in the Lie algebra of symmetries such that $[\partial \mathcal{L},\mathcal{L}]\subset \mathcal{L}$. Then $\mathbf{F}\mathcal{L}$ is again a Lie algebra. Moreover, the derived Lie algebra $\mathbf{F}\mathcal{L}$ may well be noncommutative. 

\begin{remark}
The functor (\ref{F}) is not the only one that can be attributed to the category of differential graded Lie algebras. 
In \cite{getzler2010higher}, Getzler constructed a functor from $\mathbf{dgLie}$ to the category of $L_\infty$-algebras.  A remarkable property of Getzler's functor is that it extends the skew-symmetric part of the derived bracket (\ref{DB}), restricted to the subspace $L_{-1}$, to the full $L_\infty$-structure. The discussion of other interesting functors from the category of Leibniz algebras to that of $L_\infty$-algebras can be found in the recent paper \cite{lavau2020linftyalgebra}.  
\end{remark}

\begin{remark} Differentiating the identity $\delta_{Q_t}\omega_t=0$ w.r.t. the parameter $t$, one can see that the symmetry $\dot Q_t$ is on-shell Hamiltonian. If, as often happens, $\omega_t$ does not depend on $t$, the symmetry $\dot Q_t$ is Hamiltonian and the Leibniz product (\ref{lp}) restricts consistently onto the subspace of Hamiltonian symmetries $\mathrm{Sym}(Q_t,\omega)$. In that case, one can speak of the Leibniz algebras of Hamiltonian symmetries and the corresponding conservation laws. 
\end{remark}

\paragraph{From Lie to Leibniz and back.}

Let us return to the interpretation of the expansion (\ref{Qt}) as a formal deformation of the free gauge system  $(Q_0,\omega_0)$ by interaction. Evaluating (\ref{QdQ}) at $t=0$, we find
\begin{equation}\label{4Q}
    [Q_0,Q_1]=0\,,\qquad [Q_1,Q_1]=-2[Q_0,Q_2]\,.
\end{equation}
These equalities say that the first-order interaction $Q_1$ defines an odd symmetry whose square is a trivial symmetry of the free system.   Hence, $\tilde Q_1\in \mathrm{Sym}(Q_0)$. Formula (\ref{lp}) makes then the space of free symmetries into the Leibniz algebra with product 
\begin{equation}\label{prod}
    \tilde{X}\circ \tilde{Y}=\widetilde{[[X, Q_1],Y]}\,.
\end{equation}
We denote this Leibniz algebra by $\mathbf{Sym}(Q_0, Q_1)$ to emphasise that the product (\ref{prod}) on the symmetries of the free system $(Q_0,\omega_0)$ depends on the first-order interaction $Q_1$. At the level of graded vector spaces we  have the isomorphism  $\mathbf{Sym}(Q_0,Q_1)=\mathrm{Sym}(Q_0)[-1]$. 
As is often the case, the inclusion of interaction breaks some symmetries of the free system, and those that survive form a subalgebra $\mathcal{L}_{\mathrm{int}}$ in the Lie algebra $\mathrm{Sym}(Q_0)$. The quotient $ \mathcal{L}_{\mathrm{br}}=\mathrm{Sym}(Q_0)/\mathcal{L}_{\mathrm{int}}$, identified with the space of broken symmetries, carries no natural Lie algebra structure unless $\mathcal{L}_{\mathrm{int}}$ is an ideal in $\mathrm{Sym}(Q_0)$. 

From the viewpoint of Leibniz algebras, the unbroken symmetries constitute a {\it central ideal} in $\mathbf{Sym}(Q_0, Q_1)$, that is, \begin{equation}
    \mathbf{Sym}(Q_0,Q_1)\circ \mathcal{L}_{\mathrm{int}}\subset \mathcal{L}_{\mathrm{int}}\,,\qquad  \mathcal{L}_{\mathrm{int}}\circ \mathbf{Sym}(Q_0, Q_1)=0\,.
\end{equation}
In view of Rel. (\ref{sDB}), this includes the ideal ${I}\subset\mathcal{L}_{\mathrm{int}}$ spanned by the unbroken symmetries of the form 
\begin{equation}
\lbb\tilde{X}, \tilde Y\rbb=\frac12(-1)^{\mathrm{gh}(X)-1}\widetilde{[Q_1,[X,Y]]}\,, 
\end{equation}
$X$ and $Y$ being arbitrary  $Q_0$-invariant evolutionary vector fields.  As a result, the Leibniz product (\ref{prod}) canonically induces a Lie bracket on the vector space $ \mathcal{L}_{\mathrm{br}}[-1]$, the suspended space of broken symmetries. 

Let us summarize the results of this subsection by the following thesis. Given a consistent first-order interaction $Q_1$, we can split (noncanonically) the symmetries of the corresponding  free system $(Q_0,\omega_0)$ into two groups -- those that preserve  $Q_1$ and those that do not -- and make them both into graded Lie algebras. We emphasize that the Lie brackets in $\mathcal{L}_{\mathrm{int}}$
and $\mathcal{L}_{\mathrm{br}}[-1]$ are essentially different: the former is just the restriction of that in $\mathrm{Sym}(Q_0)$, while the latter, being induced  by the Leibniz product (\ref{prod}), involves the interaction $Q_1$. For free gauge systems, the Lie algebra of symmetries $\mathrm{Sym}(Q_0)$ is normally commutative in nonzero ghost number \cite[Sec. 3.9]{BARNICH20023} and so is its subalgebra $\mathcal{L}_{\mathrm{int}}$.  By contrast, the Lie algebra of broken symmetries $\mathcal{L}_{\mathrm{br}}[-1]$ may well be nonabelian due to the interaction $Q_1$.   Notice that whenever  $\mathrm{Sym}(Q_0)$ is abelian the corresponding Leibniz algebra $\mathbf{Sym}(Q_0,Q_1) $ is a Lie algebra with $\mathcal{L}_{\mathrm{int}}[-1]$ belonging to its centre.

\paragraph{Conservation laws.} The same construction of the derived bracket (\ref{DB}) allows one to make the Lie algebra of on-shell Hamiltonian forms $\Lambda^H(Q_0,\omega_0)$ into a Leibniz algebra. We can proceed from the simple observation that each first-order interaction $Q_1$ defines an on-shell Hamiltonian symmetry. Indeed, evaluating the equation $i_{Q_t}\omega_t\simeq \delta H_t$ at first order in $t$, we readily find
\begin{equation}\label{414}
    i_{Q_1}\omega_0 \simeq \delta H_1-i_{Q_0}\omega_1\approx \delta H_1\,.
\end{equation}
Hence, $Q_1$ is an on-shell Hamiltonian symmetry of the free gauge system $(Q_0, \omega_0)$. It follows from the second 
relation in (\ref{4Q}) that the Lie bracket (\ref{LB}) of the on-shell Hamiltonian form $H_1$ with itself is equivalent to zero:
\begin{equation}
    \{H_1,H_1\}\simeq -2\{H_0,H_2\}=2i_{Q_0}i_{Q_2}\omega_0\approx 0\,. 
\end{equation}
Notice that the degree of the bracket is opposite to the ghost number of the presymplectic structure.  Considering now the adjoint action of $H_1$, we turn  $\Lambda^H(Q_0,\omega_0)$ into a differential graded Lie algebra with the differential $\partial$ defined by the relation
\begin{equation}
    \partial \alpha=\{ H_1, \alpha\}\qquad \forall \alpha \in \Lambda^H(Q_0,\omega_0)\,.
\end{equation}
Then the functor (\ref{F}) gives us immediately the Leibniz product 
\begin{equation}\label{LH}
    \alpha \circ \beta =(-1)^{\mathrm{gh}(\alpha)+\mathrm{gh}(\omega_0)-1}\{\{H_1,\alpha\},\beta\}=\{\{\alpha, H_1\},\beta\}
\end{equation}
for all $\alpha,\beta\in \Lambda^H(Q_0,\omega_0)$. With the definition of an on-shell Hamiltonian form (\ref{HH}) we can rewrite this product in several equivalent ways:
\begin{equation}
    \alpha \circ \beta\approx  (-1)^{\mathrm{gh}(\alpha)+\mathrm{gh}(\omega_0)}\{L_{Q_1}\alpha,\beta\}\approx i_{[Q_1,X]}i_{Y}\omega\approx L_{[Q_1,X]}\beta\,.
\end{equation}
Here $X$ and $Y$
are the on-shell Hamiltonian symmetries associated with $\alpha$ and $\beta$. We will denote this Leibniz algebra by 
$\ml^H(Q_0,Q_1,\omega_0)$. As usual the passage from the Lie algebra $\Lambda^H(Q_0,\omega_0)$ to the Leibniz algebra $\ml^H(Q_0,Q_1,\omega_0)$ implies the shift in degree of on-shell Hamiltonian forms.  

The above considerations apply then to all the descendent gauge systems $(Q_0,\omega_0^k)$ associated with the free gauge system $(Q_0,\omega_0)$.  
Indeed, replacing $\omega_0$ and $\omega_1$ in (\ref{414}) with $\omega_0^k$ and $\omega_1^k$, respectively, we readily conclude that the vector field $Q_1$ is on-shell Hamiltonian relative to $(Q_0,\omega_0^k)$, that is, 
\begin{equation}
    i_{Q_1}\omega^k_0 \simeq \delta H^k_1-i_{Q_0}\omega^k_1\approx \delta H^k_1\,,
\end{equation}
$H_1^k$ being the Hamiltonian. The derived bracket construction gives then the sequence of Leibniz products 
\begin{equation}\label{LH3}
    \alpha \circ_k \beta =\{\{\alpha,H^k_1\}_k,\beta\}_k\,, \quad k=1,2,\ldots, m,
\end{equation}
for all $\alpha, \beta\in \Lambda^H (Q_0,\omega_0^k)$. We denote the corresponding Leibniz algebras by $\ml^H(Q_0,Q_1,\omega_0^k)$. 
If the assumption is made that $H^{0,\ast}_{\rm flat}(\Sigma^{\infty};\delta)=0$, the elements of these algebras become conservation laws of the free gauge system, see Proposition \ref{P34}.

In special, but not rare, instances where the presymplectic structure of the family $(Q_t,\omega_t)$ does not depend on $t$, the vector field $Q_1$ is Hamiltonian relative to $\omega_0=\omega_t$. Being a symmetry, it generates the sequence of conservation laws $H^k_1\in \mathrm{CL}_k(Q_0,\omega_0)$, $k=1,\ldots, m$. In that case the product (\ref{LH3}) restricts onto the subspace $\mathrm{CL}_k(Q_0,\omega_0)\subset \Lambda^H(Q_0,\omega^k_0)$ making it into a Leibniz algebra, which we denote by $\mathbf{CL}_k(Q_0,Q_1,\omega_0)$.

\section{Applications}

In this section, we exemplify the general approach developed above by two fundamental physical models: Yang--Mills theory and Einstein's gravity without matter. Both the theories enjoy asymptotic  conservation laws with nontrivial Leibniz algebras. The technical simplicity of these models combined with the Batalin--Vilkovisky (BV) formalism enables us to exhibit  explicitly all the relevant presymplectic structures, conserved currents and their algebras. As the third example we consider the gravity field subject to the so-called unimodularity condition. This model, being highly nonlinear,  is intended to demonstrate that lower-degree conservation laws are not prerogative of free gauge theories alone.

\subsection{Yang--Mills fields}

Let $M$ be a $4$-dimensional spacetime manifold, $\mathcal{G}$ a compact Lie algebra, and $\mathrm{Tr}$ an invariant nondegenerate trace on the universal enveloping  algebra $\mathcal{U}(\mathcal{G})$. Denote by $\Lambda_q^p(M,\mathcal{G})$ the space of differential $p$-forms on $M$ with values in $\mathcal{G}$ that have ghost number $q\in \mathbb{Z}$.  Recall that in the BV formalism the spectrum of Yang--Mills (YM) theory without matter consists of the following  fields and antifields: 
\begin{equation}
A \in \Lambda_0^1(M,\mathcal{G})\,, \quad C\in \Lambda_1^0(M,\mathcal{G})\,, \quad A^{\ast} \in \Lambda_{-1}^3(M,\mathcal{G})\,, \quad C^{\ast} \in \Lambda_{-2}^4(M, \mathcal{G})\,. 
\end{equation}
The space of fields and antifields carries the canonical symplectic structure\footnote{Hereinafter the wedge product combines the exterior product of forms with the product in $\mathcal{U}(\mathcal{G})$.}
\begin{equation}\label{YMomega}
    \omega = \mathrm{Tr} \left( \delta A\wedge \delta A^{\ast } + \delta C\wedge \delta C^{\ast} \right) \, , \qquad \mathrm{gh}(\omega) = -1 \, .
\end{equation}
The $2$-form $\omega$ being nondegenerate, the corresponding homological vector field $Q$ is uniquely defined by the relation $i_Q\omega=\delta \mathcal{L}$, where the Hamiltonian form $\mathcal{L}$ is given by the standard BV extension of the YM Lagrangian, namely,  
\begin{equation}\label{YM}
    \mathcal{L} = \mathrm{Tr} \left(\frac12 F \wedge \tilde F + A^{\ast} \wedge DC +  g  C^{\ast } CC\right)\,.
\end{equation}
Here $\tilde F$ is the Hodge dual of the curvature $2$-form $F=dA+ gA\wedge A$ and $D$ stands for the covariant derivative, e.g. $DC=dC+g[A,C]$. Regarding the YM coupling constant $g$ as deformation parameter, we can write 
\begin{equation}\label{LYM}
    \mathcal{L}= \mathcal{L}_0+ g\mathcal{L}_1+g^2 \mathcal{L}_2\,,
\end{equation}
and similar expansion takes place for the homological vector field $Q=Q_0+gQ_1+g^2Q_2$. One can easily check that $Q^2=0$ or, what is the same, $\{\mathcal{L},\mathcal{L}\}\simeq 0$, where the braces stand for the canonical BV bracket associated with the symplectic structure (\ref{YMomega}). 
The leading term $\mathcal{L}_0$ describes the dynamics of free YM fields, while $\mathcal{L}_{1}$ and $\mathcal{L}_2$ introduce a consistent interaction.  

Applying the BRST differential $\delta_Q$ to the symplectic structure (\ref{YMomega}), one can find the following descendants: 
\begin{equation}\label{om-om}
    \begin{array}{lll}
        \delta_{Q} \omega = d\omega_1  \, , \qquad& \omega_1 = \mathrm{Tr} ( \delta A \wedge \delta \tilde F+ \delta C\wedge \delta A^{\ast} ) \,, \quad\qquad & \omega_1 \in \Lambda^3(M) \, , \qquad \mathrm{gh}(\omega_1)=0\,,\\ [3 mm]
        \delta_{Q} \omega_1 = d \omega_2 \, , \quad\qquad  & \omega_2 = \mathrm{Tr} ( \delta C\wedge \delta \tilde F) \,, & \omega_2 \in \Lambda^2(M)\,,\qquad \mathrm{gh}(\omega_2)=1 \,.
    \end{array}
\end{equation}
Since $\delta_Q\omega_2=0$, the length of YM theory is equal to $3$.   Unlike (\ref{YMomega}), the descendent presymplectic forms $\omega_1$ and $\omega_2$ depend on the coupling constant $g$. The first descendant is nothing but the canonical presymplectic structure associated with the Lagrangian (\ref{YM}), that is, $\omega_1=\delta \theta_1$, where the presymplectic potential $\theta_1$ comes from the variation 
    $\delta \mathcal{L}=i_Q\omega +d\theta_1$. The second descendant $\omega_2$ gives rise to the Lie and Leibniz algebra structures on lower-degree conservation laws (surface charges).
    
    In order to define these algebraic structures we note that $Q_1$ is a Hamiltonian vector field generating a symmetry of the free Lagrangian $\mathcal{L}_0$. By Theorem \ref{th31}, it yields a pair of conservation laws $\mathbf{J}_1$ and $\mathbf{J}_2$ defined by
\begin{equation}
    \begin{array}{llll}
     \delta_{Q_0} \mathcal{L}_1 = d \mathbf{J}_1 \, ,& \quad \mathbf{J}_1 = \mathrm{Tr} \big( \tilde F_0\wedge [A,C]+A^{\ast} C  C\big)\,,& \quad \mathbf{J}_1 \in \Lambda^3(M) \, ,&\quad \mathrm{gh}(\mathbf{J}_1)=1\,,\\[3mm]
        \delta_{Q_0} \mathbf{J}_1 = d \mathbf{J}_2 \, , & \quad \mathbf{J}_2 = \mathrm{Tr} \big( \tilde F_0 CC \big)\,, & \quad \mathbf{J}_2 \in \Lambda^2(M)\, ,&\quad \mathrm{gh}(\mathbf{J}_2)=2\,,\\[3mm]
        \delta_{Q_0}\mathbf{J}_2=0\,,&&&
    \end{array}
\end{equation}
$F_0=dA$ being the strength of free YM fields. In the calculations above we used the following formulas for the action of the free BRST differential on fields and antifields:
\begin{equation}
 \delta_{Q_0} C^\ast=dA^\ast\,,\qquad \delta_{Q_0} A^\ast=d\tilde{F}_0\,,\qquad \delta_{Q_0}A=dC\,,\qquad \delta_{Q_0}   C=0\,.
\end{equation}
With the help of $\mathbf{J}$'s we can turn the Lie algebras of on-shell Hamiltonian forms of degree three and two into a pair of Leibniz algebras by setting
\begin{equation}\label{YMLeibniz}
    \alpha\circ_k \beta = \{ \{\alpha, \mathbf{J}_k \}_k, \beta \}_k\,.
\end{equation}
Here the braces with the subscript $k=1,2$ stand for the Lie brackets determined by the presymplectic structures (\ref{om-om}). 

The $2$-form $\mathbf{J}_2$ is not the only surface current that one can attribute to  free YM fields. 
The free Lagrangian $\mathcal{L}_0$ obviously enjoys the shift symmetry $C \rightarrow C + \epsilon \xi$, where $\xi$ is a vector of $\mathcal{G}$ and $\epsilon$ is a constant parameter of ghost number one ($d\epsilon=0$).  By  Noether's theorem, we get immediately the conserved current 
\begin{equation}\label{j0}
        J_1^\xi = \mathrm{Tr}(\xi A^{\ast}) \in \Lambda^3(M)\,, \qquad \mathrm{gh}(J_1^\xi) = -1 \,,
\end{equation}
and its descendant 
\begin{equation}\label{j1}
J_2^\xi =    \mathrm{Tr}(\xi \tilde {F}_0)\in \Lambda^2(M)\,, \qquad \mathrm{gh}(J_2^\xi) = 0 \,.
\end{equation}
Obviously,
\begin{equation}
    dJ^\xi_1\approx 0\,,\qquad dJ^\xi_2\approx 0\,, \qquad \delta_{Q_0}J_1^\xi=dJ_2^\xi\,,\qquad \delta_{Q_0}J^\xi_2=0\,. 
\end{equation} The conserved current $J^\xi_2$, being of ghost 
number zero,  admits a straightforward  physical interpretation. If $S$ is a closed space-like surface in $M$, then the integral 
\begin{equation}
    q^\xi=\int_S J_2^\xi
\end{equation}
defines the net (color) charge enclosed by the surface $S$. The surface currents (\ref{j0}) and (\ref{j1}) form the abelian Lie algebras 
\begin{equation}
    \{J^\xi_k,J_k^{\xi'}\}_k=0\,,\qquad k=1, 2,
\end{equation}
as is usually the case for free theories. The first-order interaction $\mathcal{L}_1$ breaks the shift symmetry above, leading thus to  nontrivial Leibniz products of surface currents:
\begin{equation}
    J_k^\xi\circ_k J_k^{\xi'}=J_k^{[\xi,\xi']}\,,\qquad \forall \xi,\xi'\in \mathcal{G}\,,\qquad k=1,2.
\end{equation}
As is seen, both the products are skew-symmetric and define the Lie algebra structure isomorphic to $\mathcal{G}$. In such a way we are able to 
reproduce the color Lie algebra at the level of surface currents.  We emphasize that the currents $J_k^\xi$ are conserved on the free equations of motion 
$Q_0=0$ and cannot be promoted to conservation laws of full YM theory.  Nevertheless, to make them into  a  nonabelian Lie algebra 
we need the cubic interaction vertices accommodated
in $\mathcal{L}_1$.  These vertices are essentially responsible for the `nonabelian part' of the gauge generators. 

In order to present a genuine example of Leibniz algebra which is not Lie, we note that the 
free Lagrangian $\mathcal{L}_0$ is invariant under orthogonal transformations. Indeed, as the Lie algebra $\mathcal{G}$ is supposed to be compact, there is a basis $\{t_a\}_{a=1}^n$ in $\mathcal{G}$ such that $\mathrm{Tr}(t_at_b)=\delta_{ab}$. Expanding now the fields in terms of this basis, e.g. 
$A=A^at_a$, we see that the quadratic in fields Lagrangian $\mathcal{L}_0$ is determined by the Euclidean metric $\delta_{ab}$. As a result,  it appears to be invariant under 
the $O(n)$-rotations of fields: $A^a\rightarrow A'^a=R^a{}_b A^b$ and similar transformations  for the other fields; here $R=(R^a{}_b)$ is an orthogonal matrix from $O(n)$. Applying Noether's theorem to infinitesimal rotations gives the following conserved currents for the free YM fields: 
\begin{equation}
        j^{ab}_1 =  A^{[a} \wedge \tilde F_0^{b]} + C^{[a} A^{\ast \,\, b]} \,, \qquad  j^{ab}_1 \in \Lambda^3(M) \,,\qquad \mathrm{gh}(j^{ab}_1)=0\,.
\end{equation}
These have descendants of the first generation:
\begin{equation}
   \delta_{Q_0}j^{ab}_1=dj_2^{ab}\,,\qquad  j^{ab}_2= C^{[a} \tilde F_0^{b]} \,, \qquad j^{ab}_2 \in \Lambda^2(M)\,,\qquad \mathrm{gh}(j_2^{ab})=1\,. 
\end{equation}
As usual the square brackets stand for skew-symmetrization of indices.  By Theorem \ref{th33}, the Lie brackets of the currents reproduce the commutation relations of the Lie algebra $o(n)$: 
\begin{equation}
    \{j_k^{ab}, j_k^{cd} \}_k = \delta^{bc} j_k^{ad} - \delta^{ac} j_k^{bd} + \delta^{ad} j_k^{bc}- \delta^{bd} j_k^{ac}   \,,\qquad k=1, 2.
\end{equation}
The currents $j^{ab}_k$ generate the Hamiltonian action of $o(n)$ on the space of fields and antifields. 
Evaluating now the symmetric part of the Leibniz products $j_k^{ab}
\circ_k j_k^{cd}$, one easily finds
\begin{equation}\label{jjj}
     \lbb j^{ab}_k, j^{cd}_k \rbb_k = -\frac12\{\mathbf{J}_k,\{j_k^{ab},j_k^{cd}\}_k\}_k=\delta^{bc} {J}^{ad}_k- \delta^{ac} {J}_k^{bd}   + \delta^{ad} {J}^{bc}_k - \delta^{bd} {J}^{ac}_k \,,
\end{equation}
where $J^{ab}_k=-\frac12\{\mathbf{J}_k, j^{ab}_k\}_k$ are new conserved currents of ghost number $k$.
Generally the structure constants of the Lie algebra $\mathcal{G}$, entering the currents $\mathbf{J}_k$, are  not $O(n)$-invariant, so that the new currents $J^{ab}_k$ are different from zero. 

Writing the currents  (\ref{j0}) and (\ref{j1}) in terms of the basis above, $J^\xi_k=\delta_{ab}J^a_k\xi^b$, we obtain
\begin{equation}\label{jjjj}
\begin{array}{rcl}
    j^{ab}_k\circ_k J^c_k &=& \displaystyle f^{ac}{}_{d}\, j^{db}_k - f^{bc}{}_{d}\, j^{da}_k +   \delta^{ac}f^b_{de}\, j_k^{de} - \delta^{bc}f^a_{de}\, j_k^{de}\,,\\[3mm]
    J^c_k\circ_k j^{ab}_k &=& f^{bc}{}_{d}\, j^{da}_k - f^{ac}{}_{d}\, j^{db}_k\,,\qquad k=1,2.
    \end{array}
\end{equation}
These Leibniz products also have nonzero symmetric parts.

\subsection{Einstein gravity}
In the vierbein formalism, the dynamics of the gravity field are described by ten $1$-form fields: the vierbein  $e^a$ and the 
spin-connection $w^{ab}=-w^{ba}$. As usual, we use the Minkowski metric $\eta_{ab}$ for raising and lowering the Lorentz indices $a,b=0,1,2,3$.  
In order to have explicit control over the gauge symmetries corresponding to the diffeomorphisms of the spacetime manifold $M$ and local Lorentz invariance, one extends the field content with the 
ghost fields $c^a$ and $c^{ab}=-c^{ba}$ as well as the corresponding  antifields labeled by star. Table \ref{T1} collects the full spectrum of fields and antifields together with their form degrees and ghost numbers.  

\begin{table}[h!]
\centering
    \begin{tabular}[t]{c|c|c|c|c|c|c|c|c}
       &  $c^{*}_a$ & $c^{*}_{ab}$ & $e^{*}_a$ & $w^{*}_{ab}$ & $e^a$ & $w^{ab}$ & $c^a$ & $c^{ab}$ \\
    \hline
    $\mathrm{deg}$ & 4 & 4 & 3 & 3 & 1 & 1 & 0 & 0 \\ 
     \hline
     $\mathrm{gh}$ & -2 & -2 & -1 & -1 & 0 & 0 & 1 & 1 \\
\end{tabular}
\caption{Fields and antifields in the vierbein formulation of gravity.}
\label{T1}
\end{table}
The fields and antifields are canonically conjugate to each other w.r.t. the BV symplectic structure
\begin{equation}\label{Gom}
    \omega = \delta e^{*}_a \wedge \delta e^a + \delta w^{*}_{ab} \wedge \delta w^{ab} + \delta c^{*}_a \wedge \delta c^a + \delta c^{*}_{ab} \wedge \delta c^{ab}\,.
\end{equation}
The BV master Lagrangian of Einstein's gravity  now reads  
\begin{equation}\label{GL}
    \mathcal{L} = \frac12 \epsilon_{abcd} e^a \wedge e^b \wedge R^{cd}+ e^{\ast}_a \wedge Dc^a +  w^{\ast}_{ab} \wedge Dc^{ab} + e^{\ast}_a \wedge c^a{}_b e^b + c^{\ast}_a c^a{}_b c^b + \frac12 c^{\ast}_{ab} c^a{}_c c^{cb}.
\end{equation}
Here $D=d+w$ is the Lorentz covarian differential  with the curvature $2$-form $R^{ab} = d w^{ab} + w^a{}_c \wedge w^{cb}$. 
The homological vector field $Q$ underlying Einstein's gravity is just the Hamiltonian vector field generated by the master Lagrangian (\ref{GL}) and the symplectic structure (\ref{Gom}). As is well known, see e.g. \cite{brahamsixth}, the characteristic cohomology of  pure gravity is empty. The nontrivial conservation laws arise only upon linearization of the Lagrangian (\ref{GL}) about a suitable geometric background. Consider, for simplicity, the flat background geometry with a vierbein $h^a$ obeying $dh^a=0$. Then we can put $e^a = h^a + \tilde e^a$, where the $1$-forms  $\tilde e^a$ describe fluctuations over the flat background.  On substituting this decomposition into (\ref{GL}), we get\footnote{Rescaling all the fields and antifields by $g$, while multiplying  $\mathcal{L}$ by $g^{-2}$, we  can bring the Lagrangian (\ref{L012}) into the form (\ref{LYM}) with $g$ playing the role of coupling constant. } 
\begin{equation}\label{L012}
    \mathcal{L}\simeq \mathcal{L}_0+\mathcal{L}_1+\mathcal{L}_2\,,
\end{equation}
where 
\begin{equation}\label{Lagranglingrav}
\begin{array}{rcl}
       \mathcal{L}_0 &=& \epsilon_{abcd} dw^{ab} \wedge \tilde e^c \wedge h^d + \frac12 \epsilon_{abcd} w^a{}_e \wedge w^{eb} \wedge h^c \wedge h^d  \nonumber  + e^{*}_a \wedge (dc^a + c^{ab}h_b) + w^{*}_{ab} \wedge dc^{ab}\,,\\[3mm]
       \mathcal{L}_1&=& \frac12 \epsilon_{abcd} dw^{ab} \wedge \tilde e^c \wedge \tilde e^d + \epsilon_{abcd} w^a{}_e \wedge w^{eb} \wedge \tilde e^c \wedge h^d + e^{*}_a \wedge (w^a{}_e c^e + c^{ab}\tilde e_b) + w^{*}_{ab} \wedge w^a{}_e c^{eb} \, + \\[3mm]
        & & + \, c^{\ast}_a c^a{}_b c^b + \frac12 c^{\ast}_{ab} c^a{}_c c^{cb} \,,\\[3 mm]
       \mathcal{L}_2&=& \frac12 \epsilon_{abcd} w^a{}_e \wedge w^{eb} \wedge \tilde e^c \wedge \tilde e^d\,.
\end{array}
\end{equation}
It is convenient to introduce the following background $1$- and $2$-forms:
\begin{equation}
    h_{abc}=\epsilon_{abcd}h^d\,,\qquad H_{ab}=\epsilon_{abcd}h^c\wedge h^d\,.
\end{equation}
Then the action of the free BRST differential is given by
\begin{equation}\label{gaugetrans}
    \begin{array}{ll}
        
         \delta_{Q_0} e^a = dc^a + c^a{}_b h^b\,, \qquad\qquad& \delta_{Q_0} e^{*}_a = d w^{bc} \wedge h_{abc}\,, \\ [3 mm]
          \delta_{Q_0} w^{ab} = dc^{ab}\,, \quad & \delta_{Q_0} w^{*}_{ab} = de^c\wedge h_{abc} + H_{ac}\wedge w^c{}_{b} -H_{bc}\wedge w^c{}_a\,,\\[3mm]
         \delta_{Q_0} {c}^a = 0\,, &  \delta_{Q_0} {c}^{*}_a = de^{*}_a\,,\\[3mm]
          \delta_{Q_0} {c}^{ab}= 0\,, & \delta_{Q_0} {c}^{*}_{ab} = dw^{*}_{ab} - e^{\ast}{}_{a} \wedge h_{b}+ e^{\ast}{}_{b} \wedge h_{a}\,. 
    \end{array}
\end{equation}
The BV symplectic structure (\ref{Gom}) has the following descendants: 
\begin{equation}
\small
    \begin{array}{lll}
        \delta_{Q_0} \omega = d \omega_1\,, \qquad\qquad & \omega_1 = \delta e^a \wedge \delta w^{bc} \wedge h_{abc} + \delta c^a \wedge \delta e^{*}_a +  \delta c^{ab} \wedge \delta w^{*}_{ab} \,,\\ [3 mm]
        \delta_{Q_0} \omega_1 = d \omega_2\,, & \omega_2 = (\delta c^a \wedge \delta w^{bc} + \delta c^{ab} \wedge \delta e^c) \wedge h_{abc} \,,\\ [3 mm]
        \delta_{Q_0} \omega_2 = d\omega_3\,, & \omega_3 = \delta c^{ab} \wedge \delta c^c \wedge h_{abc} \,.
    \end{array}
\end{equation}
The free Lagrangian $\mathcal{L}_0$ is invariant under the shifts 
\begin{equation}\label{symCC}
    c^a \rightarrow c^a+\xi^a\,,\qquad c^{ab} \rightarrow c^{ab}+\xi^{ab}\,,
\end{equation}
where the transformation parameters $\xi=(\xi^a, \xi^{ab})$ obey the conditions
\begin{equation}
    \xi^{ab}=-\xi^{ba}\,,\qquad d\xi^{ab}=0\,,\qquad d\xi^a = \xi^{ab}h_b\,.
\end{equation}
In local coordinates where $h^a=dx^a$, we can solve these equations as $\xi^{a}=\zeta^a+\zeta^{ab}x_a$ and $\xi^{ab}=\zeta^{ab}$, with $\zeta^a$ and $\zeta^{ab}=-\zeta^{ba}$ being  arbitrary constant parameters. The symmetry (\ref{symCC}) is Hamiltonian relative to the BV symplectic structure (\ref{Gom}) and is generated by the Hamiltonian form 
$H=\xi^ac^\ast_a+\xi^{ab}c^\ast_{ab}$. By Noether's first theorem, we obtain the $10$-parameter family of conserved currents. We find
\begin{equation}
   \delta_{Q_0}H=dJ^{\xi}_1\,,\qquad  J^\xi_1=\xi^a e^\ast_a+\xi^{ab}w^\ast_{ab}=\zeta_a P^a_1+\zeta_{ab}M^{ab}_1\,.
\end{equation}
Applying the BRST differential (\ref{gaugetrans}) yields the following descendants:
\begin{equation}
\begin{array}{ll}
      \delta_{Q_0} J^\xi_1 = dJ_2^{\xi}\,,\qquad\qquad &J^{\xi}_2=\zeta_a P_2^a+\zeta_{ab}M^{ab}_2=(\xi^{ab} e^c+\xi^a w^{bc})\wedge h_{abc} \,, \\[3mm]
     \delta_{Q_0} J^\xi_2 = dJ_3^{\xi}\,,\quad &J^{\xi}_3=\zeta_a P_3^a+\zeta_{ab}M^{ab}_3=(\xi^{ab}c^c  + \xi^ac^{bc})\wedge h_{abc}\,, 
\end{array}
\end{equation}
and $\delta_{Q_0}J_3^\xi=0$. The conserved currents $J^\xi_2$, being $2$-forms of ghost number zero, are used to define the total energy-momentum $\mathcal{P}$ and the angular momentum $\mathcal{M}$ of an asymptotically flat universe.  These are given by the charges 
\begin{equation}
    \mathcal{P}^a=\int_S P_2^a\,,\qquad \mathcal{M}^{ab}=\int_S M^{ab}_2\,,
\end{equation}
where the integrals are over a closed space-like surface $S\subset M$ at infinity. In particular, $\mathcal{P}^0$ gives the ADM energy  \cite{PhysRevLett.77.4109, BARNICH20023}. From the physical viewpoint, it is quite  natural to identify the above charges with the generators of the Poincar\'e group -- the isometry group of flat spacetime -- and  to expect them to form the Poincare algebra w.r.t. the Lie brackets. However, this is not the case. A straightforward calculation yields 
\begin{equation}
    \{J^\xi_1,J^{\xi'}_1\}_1=0\,,\qquad \{J_2^\xi, J_2^{\xi'}\}_2=0\,,\qquad \{J_3^\xi,J_3^{\xi'}\}_3=-(\xi^a\xi'^{bc}+\xi'^a\xi^{bc})h_{abc}\,.
\end{equation}
As is seen, the abelian Lie algebra of the symmetry transformations (\ref{symCC}) gets a central extension when evaluated at the level of the $1$-form currents $J_3^\xi$. This is the phenomenon of central extension that we discussed in Sec. \ref{S3}.

In order to reproduce the commutation relations of the Poincar\'e algebra the cubic interaction of gravitons needs to be taken into account.  As usual, the cubic part of the Lagrangian (\ref{L012}) leads to a sequence of conserved currents of the corresponding free theory. Explicitly, 
\begin{equation}
\small
    \begin{array}{ll}
        \delta_{Q_0} \mathcal{L}_1 = d\mathbf{J}_1\,, \qquad\qquad & \mathbf{J}_1 =  \big(\frac12 c^a  w^b{}_d \wedge w^{dc} - w^a{}_d \wedge e^d c^{bc}\big)\wedge h_{abc} + e^{*}_a c^{ab}  c_b + \frac12 w^{*}_{ab} c^a{}_c c^{cb}\, , \\ [3mm] 
        \delta_{Q_0} \mathbf{J}_1 = d\mathbf{J}_2\,, & \mathbf{J}_2 = \big(c^a  w^b{}_d c^{dc} + \frac12 e^a c^b{}_d c^{c}\big)\wedge h_{abc} \, ,\\ [3mm]
        \delta_{Q_0} \mathbf{J}_2 = d\mathbf{J}_3\,, & \mathbf{J}_3 = \frac12 c^a c^b{}_d c^{dc}h_{abc}\,.
    \end{array}
\end{equation}
By formulas (\ref{LH}, \ref{LH3}), these currents define Leibniz algebra structures on the spaces of on-shell Hamiltonian forms of degree $3$, $2$, and $1$. Since the currents $J^\xi_k$ commute to each other modulo central elements,  the symmetric part
of the corresponding Leibniz products 
\begin{equation}
    J^\xi_k \circ_k J^{\xi'}_k = \{\{ J^\xi_k, \mathbf{J}_k\}_k, J^{\xi'}_k \}_k\,,\qquad k=1,2,3,
\end{equation} 
vanishes identically, while the skew-symmetric one reproduces the structure relations of the Poincar\'e algebra:
\begin{equation}
    \begin{array}{l}
        \lb M_k^{ab}\,, M_k^{a'b'} \rb_{k} = \eta^{aa'} M_k^{bb'} - \eta^{ba'} M_k^{ab'} + \eta^{bb'} M_k^{aa'} - \eta^{ab'} M_k^{ba'} 
        \,, \\ [3 mm]
        \lb P_k^a\,, M_k^{bc} \rb_k = \eta^{ac} P_k^{b} - \eta^{ab} P_k^{c} \,, \\ [3mm] 
        \lb P_k^a, P_k^b \rb_k = 0\,.
    \end{array}
\end{equation}
It should be recognized that the construction of this Lie algebra involves essentially the cubic part $\mathcal{L}_1$ of the Lagrangian, and not just the conserved currents of linearized gravity.

\subsection{Unimodular gravity}

Unimodular gravity provides an example of a {\it nonlinear} gauge theory with lower-degree conservation laws, see e.g. \cite{PhysRevD.40.1048,HENNEAUX1989195,Percacci_2018}.  
This time it is convenient to work in the metric instead of the vierbein   formalism we used above.  So, let  $M$ be a four-dimensional spacetime manifold endowed with a pseudo-Riemannian metric $g_{\mu\nu}$ and let $\sqrt{-g}d^4x$ denote the canonical volume form on $(M,g)$. The main idea of unimodular gravity, which goes back to Einstein,   is to impose the algebraic constraint 
\begin{equation}\label{fixg}
     \sqrt{-g} = 1 
\end{equation}
on the metric tensor. Physically, one may regard this as the partial fixing of a reference frame. 
The unimodularity condition  (\ref{fixg}) breaks the group of spacetime diffeomorphisms -- the gauge group of general relativity -- to the subgroup of volume preserving diffeomorphisms. If  $\xi$ is a vector field generating an infinitesimal gauge transformation 
\begin{equation}\label{gg}
    \delta_\xi g_{\mu\nu}=L_\xi g_{\mu\nu}=\nabla_\mu\xi_\nu+\nabla_\nu\xi_\mu\,,
\end{equation}
then the unimodularity condition requires that    
\begin{equation}\label{xi1}
    \delta_\xi\sqrt{-g}=L_\xi\sqrt{-g}=\sqrt{-g}\nabla_{\mu}\xi^\mu=0 \qquad \Rightarrow \quad \nabla_\mu \xi^\mu = 0 \,.
\end{equation}
Hence, the gauge parameter $\xi$ appears to be constrained by a differential equation. One can solve this equation for $\xi^\mu$ in terms of an arbitrary bivector  $\xi^{\mu\nu}=-\xi^{\nu\mu}$ as $\xi^\mu=\nabla_\nu\xi^{\nu\mu}$. Then the volume preserving gauge transformations take the form 
\begin{equation}\label{xi2}
    \delta_\xi g_{\mu\nu}=\nabla_\mu \nabla^\lambda \xi_{\lambda\nu}+\nabla_\nu \nabla^\lambda \xi_{\lambda\mu}\,.
\end{equation}
Unlike (\ref{gg}), these transformations appear to be reducible. Indeed, using the symmetry properties of the Riemann and Ricci tensors, one can see that the shift 
\begin{equation}\label{xx}
\xi^{\mu\nu}\rightarrow \xi^{\mu\nu}+\nabla_\lambda\xi^{\lambda\mu\nu}
\end{equation}
does not affect the r.h.s. of Eq. (\ref{xi2}) for an arbitrary $3$-vector $\xi^{\lambda\mu\nu}$; hence,  one may regard (\ref{xx}) as gauge symmetry for gauge symmetry. 
The latter transformation, in its turn, is invariant under similar redefinitions 
\begin{equation}\xi^{\lambda\mu\nu}\rightarrow \xi^{\lambda\mu\nu}+\nabla_\sigma\xi^{\sigma\lambda\mu\nu}\,,\end{equation} 
with $\xi^{\sigma\lambda\mu\nu}$ being an arbitrary $4$-vector. At this step the sequence of reducibility relations stops by the reason of dimension. 
Now, to put unimodular gravity into the standard BV formalism, one just promotes the gauge parameters $\xi^{\mu\nu\cdots}$ to the corresponding ghost fields $C^{\mu\nu\cdots}$ and introduces the conjugate antifields\footnote{For an alternative BV formulation of unimodular gravity we refer to  \cite{Kaparulin:2019gsx}.}. The full field content is represented in Table \ref{T2} below.

\begin{table}[h!]
\centering
    \begin{tabular}[t]{c|c|c|c|c|c|c|c|c}
       &  $C^{*}_{\sigma\lambda\mu\nu}$ & $C^\ast_{\lambda\mu\nu}$ & $C^\ast_{\mu\nu}$ & $g^\ast_{\mu\nu}$ & $g^{\mu\nu}$ & $C^{\mu\nu}$ & $C^{\lambda\mu\nu}$ & $C^{\sigma\lambda\mu\nu}$ \\
     \hline
     $\mathrm{gh}$ & -4 & -3 & -2 & -1 & 0 & 1 & 2 & 3 \\
\end{tabular}
\caption{Fields and antifields of unimodular gravity.}\label{T2}
\end{table}
Notice that all the ghost and antighost fields are totally skew-symmetric tensors and the antifield $g^\ast_{\mu\nu}=g^\ast_{\nu\mu}$ is supposed to be traceless, i.e., 
\begin{equation}\label{fixtraceg}
    g^{\mu\nu}g^\ast_{\mu\nu}=0\,.
\end{equation}  
A relevant presymplectic structure on  the fields and antifields is obtained by restricting the canonical symplectic form 
\begin{equation}\label{omegaUnimod}
    \omega=\big(\delta g^\ast_{ \mu\nu}\wedge \delta g^{\mu\nu} + \delta C^\ast_{\mu\nu}\wedge \delta C^{\mu\nu}+ \delta C^\ast_{\lambda\mu\nu}\wedge \delta C^{\lambda\mu\nu}+ \delta C^\ast_{\sigma\lambda\mu\nu}\wedge \delta C^{\sigma\lambda\mu\nu}\big)\sqrt{-g}d^4x
\end{equation}
onto the subspace of fields obeying the constraints (\ref{fixg}) and (\ref{fixtraceg}). Since the bracket of constraints defined by $\omega$ is clearly nonzero, 
\begin{equation}
    \big\{d^4x(\sqrt{-g}-1), d^4x\sqrt{-g}g^{\mu\nu}g^\ast_{\mu\nu}\big\}=2\sqrt{-g}d^4x>0\,,
\end{equation}
the restricted form is likewise symplectic. 

Now, the (minimal) master action for unimodular gravity can be written as  
\begin{equation}\label{UngravS}
    S = \int d^4x \sqrt{-g} \Big[R + g^{\ast}_{\mu\nu} \nabla^\mu \nabla_\lambda C^{\lambda\nu} + C^{\ast}_{\mu\nu}\nabla_{\lambda} C^{\lambda\mu\nu} + C^{\ast}_{\mu\nu\lambda} \nabla_\sigma C^{\sigma\mu\nu\lambda}  \Big] \,.
\end{equation}
Although the first term looks like the conventional Einstein--Hilbert action,
one should keep in mind the algebraic constraint (\ref{fixg}) enforced on the metric.  Evaluating the variation of the action (\ref{UngravS}) under the variation of metric subject to the condition $g^{\mu\nu}\delta g_{\mu\nu}=0$, one can see that that the corresponding equations of motion are fully equivalent to the Einstein equations with cosmological constant:  
\begin{equation}\label{RR}
    R_{\mu\nu}-\frac12g_{\mu\nu}R=-\frac14\Lambda g_{\mu\nu}\,.
\end{equation}
The only difference is that the constant $\Lambda$ is now a constant of integration rather than a parameter in the action functional. It follows from Eq. (\ref{RR}) that $R=\Lambda$, i.e., the scalar curvature $R$ represents a zero-degree conservation law.  One may wonder about a global symmetry this conservation  law comes from.  The answer is almost obvious:  the action is invariant under the transformations 
\begin{equation}\label{symmC-3}
    C^{\mu\nu\lambda\sigma} \rightarrow C^{\mu\nu\lambda\sigma} + \frac{\kappa}{\sqrt{-g}} \epsilon^{\mu\nu\lambda\sigma} \,,
\end{equation}
where $\epsilon^{\mu\nu\lambda\sigma}$ is the Levi--Civita symbol and $\kappa$ is a constant parameter of ghost number $3$. By the No\"ether theorem this gives immediately the conserved current \begin{equation}
    J_0 = C^{\ast}_{\mu\nu\lambda} dx^{\mu} \wedge dx^{\nu} \wedge dx^{\lambda} \,,\qquad \mathrm{gh}(J_0)=-3\,.
\end{equation}
Although this current on its own has no physical interpretation, it  gives rise to a sequence of lower-degree conservation laws. Indeed, applying the BRST differential generated by the master Lagrangian (\ref{UngravS}) and the symplectic structure (\ref{omegaUnimod}),  we find 
\begin{equation}
    \begin{array}{lll}
        \delta_Q J_0 = d J_{1} \,, \qquad\qquad & J_{1} = C^{\ast}_{\mu\nu} dx^{\mu} \wedge dx^{\nu} \,, \qquad\qquad &\mathrm{gh}(J_1)=-2\,,\\ [3 mm]
        \delta_Q J_{1} = d J_{2} \,, \qquad & J_{2} = \nabla^\mu g^{\ast}_{\mu\nu} dx^{\nu} \,,\qquad &\mathrm{gh}(J_2)=-1\,,\\ [3 mm]
        \delta_Q J_{2} = d J_3 \,, \qquad & J_3 = - \frac14 R\,,\qquad & \mathrm{gh}(J_3)=0\,.
    \end{array}
\end{equation}
The last nontrivial current of ghost number zero is proportional to the scalar curvature we have discussed above.

As one more source of examples of nonlinear gauge theories with lower-degree conservation laws, we would like to mention Higher Spin Gravity.  As shown in the recent paper \cite{Sharapov_2020}, the higher-spin extension of four-dimensional gravity enjoys an infinite number of nontrivial conservation laws of degrees zero and two.  
\section*{Acknowledgments}
\label{sec:Aknowledgements}
The second author is grateful to Glenn Barnich for useful discussions. This study was supported by the Tomsk State University Development Programme (Priority-2030) and by the Foundation for the Advancement of Theoretical Physics and Mathematics “BASIS”.

\bibliographystyle{plain}

\providecommand{\href}[2]{#2}\begingroup\raggedright\endgroup
\end{document}